\documentclass{amsart}[11pt]
\usepackage[margin=3cm]{geometry}
\usepackage{amsmath,amsfonts, amsthm, amssymb,mathtools}
\usepackage{hyperref}
\usepackage{enumerate}
\usepackage{color}
\usepackage{float}
\usepackage{subfigure}

\usepackage[colorinlistoftodos,prependcaption,textsize=tiny]{todonotes}
\presetkeys{todonotes}{inline}{}


\numberwithin{equation}{section}
\newtheorem{theorem}{Theorem}[section]
\newtheorem{corollary}[theorem]{Corollary}
\newtheorem{lemma}[theorem]{Lemma}
\newtheorem{proposition}[theorem]{Proposition}

\theoremstyle{definition}

\newtheorem{example}[theorem]{Example}
\newtheorem{algo}[theorem]{Algorithm}
\newtheorem{remark}[theorem]{Remark}
\newtheorem{assumption}[theorem]{Assumption}

\newcommand{\QQ}{\mathbb{Q}}
\newcommand{\RR}{\mathbb{R}}
\newcommand{\PP}{\mathbb{P}}
\newcommand{\VV}{\mathbb{V}}

\newcommand{\D}{\mathrm{d}}
\newcommand{\E}{\mathrm{e}}
\newcommand{\EE}{\mathbb{E}}
\newcommand{\Ff}{\mathcal{F}}
\newcommand{\Oo}{\mathcal{O}}
\newcommand{\Tt}{\mathcal{T}}
\newcommand{\eps}{\varepsilon}
\newcommand{\half}{\frac{1}{2}}
\newcommand{\Hm}{H_{-}}
\newcommand{\Hp}{H_{+}}
\newcommand{\XiT}{\Xi_{T}}

\newcommand{\Ptilde}{\widetilde{P}}
\newcommand{\Cf}{\mathfrak{C}}
\newcommand{\Df}{\mathfrak{D}}
\newcommand{\Wf}{\mathfrak{W}}
\newcommand{\tf}{\mathfrak{t}}
\DeclareMathOperator{\sgn}{sgn}

\def\blue#1{\textcolor{black}{#1}}

\begin{document}
\title{Interest rate convexity in a Gaussian framework}
\date{\today}
\author{Antoine Jacquier}
\address{Department of Mathematics, Imperial College London, and the Alan Turing Institute}
\email{a.jacquier@imperial.ac.uk}
\author{Mugad Oumgari}
\address{University College London and Lloyds Banking}
\email{Mugad.Oumgari@lloydsbanking.com}
\keywords{interest rates, fractional Brownian motion, convexity adjustment}
\subjclass[2010]{60G15, 91-10}
\thanks{The authors would like to thank Damiano Brigo for helpful comments.
AJ is supported by the EPSRC grants EP/W032643/1 and  EP/T032146/1.
‘For the purpose of open access, the author(s) has applied a Creative Commons Attribution (CC BY) licence (where permitted by UKRI, ‘Open Government Licence’ or ‘Creative Commons Attribution No-derivatives (CC BY-ND) licence’ may be stated instead) to any Author Accepted Manuscript version arising’.}

\maketitle

\begin{abstract}
The contributions of this paper are twofold:
we define and investigate the properties of a short rate model driven by a general Gaussian Volterra process and, after defining precisely a notion of convexity adjustment, derive explicit formulae for it.
\end{abstract}


\section{Introduction and notations}
\subsection{Introduction}
In fixed income markets, the 
different schedules of payments and the diverse currencies, margins require specific adjustments in order to price all interest rate products consistently.
This is usually referred to as convexity adjustment and has a deep impact on interest rate derivatives.
Starting from~\cite{brotherton1993yield, flesaker1993arbitrage, ritchken1993averaging},
academics and practitioners alike have developed a series of formulae for this convexity adjustment in a variety of models,
from simple stochastic rate models~\cite{kirikos1997convexity} to some incorporating stochastic volatility features~\cite{andersen2010interest}.
Recently, Garcia-Lorite and Merino~\cite{garcia2023convexity} used Malliavin calculus techniques to compute approximations of this convexity adjustment for various interest rate products.
Motivated by the new paradigm of rough volatility in Equity markets~\cite{bayer2016pricing, bonesini2023rough, el2018microstructural, fukasawa2021volatility, gatheral2022volatility, jacquier2021rough, jacquier2023deep},
we consider here stochastic dynamics for the short rate, driven by a general Gaussian Volterra process, providing more flexibility than standard Brownian motion.
In the framework of the change of measure approach in~\cite{pelsser2003mathematical}, 
we introduce a clear definition of convexity adjustment \blue{for zero coupon
bonds, in Proposition~\ref{prop:Convexity}, namely as the non-martingale correction of ratios of zero-coupon prices under the forward measure}, for which we are able to derive closed-form expressions
\blue{or asymptotic approximations}.
We introduce the model, derive its properties in Section~\ref{sec:Gaussian}.
In Section~\ref{sex:Convexity}, we define convexity adjustment and provide formulae for it, the main result of the paper, which we illustrate in some specific examples.
Section~\ref{sec:Products} provides some further expressions for liquid interest rate products, and we highlight some numerical aspects of the results in Section~\ref{sec:Numerics}.

\subsection{Model and notations}

On a given filtered probability space $(\Omega, \Ff, (\Ff_t)_{t\geq 0}, \PP)$, 
we are interested in short rate dynamics of the form
\begin{equation}\label{eq:ShortRatedWH}
r_t = \theta(t) + \int_{0}^{t}\varphi(t,u)\D \Wf_u
 = \theta(t) + \left(\varphi(t,\cdot)\circ\Wf\right)_{t},
\end{equation}
with~$\theta$ a deterministic function and~$\Wf$ a continuous Gaussian process adapted to the filtration $(\Ff_t)_{t\geq 0}$.
Here and below, given a function~$\phi$ and a stochastic process~$X$, we write
$(\phi\circ X)_{a,b} := \int_{a}^{b}\phi(s)\D X_s$,
and omit~$a$ whenever $a=0$.
For some fixed time horizon $T>0$, 
define further, for $u\leq t\leq T$,
\begin{equation}\label{eq:Phi}
\XiT(t,u) := -\int_{t}^{T}\varphi(s,u)\D s
\qquad\text{and}\qquad
\XiT(u):=\XiT(u,u)
\end{equation}
as well as $\Theta_{t,T} := \displaystyle \int_{t}^{T}\theta(s) \D s$.
We consider a given risk-neutral probability measure~$\QQ$, equivalent to~$\PP$, so that the price of the zero-coupon bond  at time~$t$ is given by
\begin{equation}\label{eq:DefZeroCoupon}
P_{t,T} := \EE^{\QQ}_{t}\left[B_{t,T}\right],
\qquad\text{where}\qquad
B_{t,T} := \exp\left\{-\int_{t}^{T}r_s \D s \right\},
\end{equation}
and we define the instantaneous forward rate process as 
\begin{equation}\label{eq:FwdRateDef}
f_{t,T}  := -\partial_{T}\log P_{t,T}.
\end{equation}

\begin{remark}
For modelling purposes, we shall consider kernels of convolution type, namely
\begin{equation}\label{eq:phiConv}
\varphi(t,u)= \varphi(t-u).
\end{equation}
\end{remark}

\subsection{Empirical motivation}
\blue{
The modelling framework above (and in particular the introduction of a potentially singular kernel)
is motivated by empirical observations.
Assume that the kernel is given by a power-law form
$\varphi(t,u) = (t-u)^{H-/\half}$ with $H \in (0,1)$, and that~$\Wf$ is a standard Brownian motion.
To estimate the Hurst exponent~$H$, we follow the methodology devised in~\cite{gatheral2022volatility}
for the instantaneous log volatility
(although more refined and robust statistical estimation techniques are now available, 
we leave a detailed empirical analysis for future work) and compute it via the linear regression
$$
\log\EE[|r_{t+\Delta} - r_{t}|^2] = 2H\log(\Delta) + c, 
\qquad\text{for }\Delta>0,
$$
for some constant~$c$.
Of course such a linear regression hinges on some assumptions on the form of~$(r_t)_{t\geq 0}$ but a detailed analysis of short rate data is beyond the scope of the present paper,
and we only provide here short insights into the potential roughness of short rates dynamics.
We consider the sport interest rate data from Option Metrics~\footnote{data available at 
\href{https://wrds-www.wharton.upenn.edu/}{WRDS/OptionMetrics}}.
We consider the data from 4/1/2010 until 28/2/2023.
For different dates within this period, Figures~\ref{fig:YC_WRDS} show the available data points (circles) as well as the interpolation by splines (the extrapolation is assumed flat).
In  Figure~\ref{fig:TimeSeriesYC_WRDS},
we compute the time series of the yield curves, 
for each (interpolated) maturities.
The estimation of the Hurst exponent for each maturity is shown in Figure~\ref{fig:Hurst_WRDS}.
\begin{figure}[H]\label{fig:YC_WRDS}
\includegraphics[scale=0.4]{"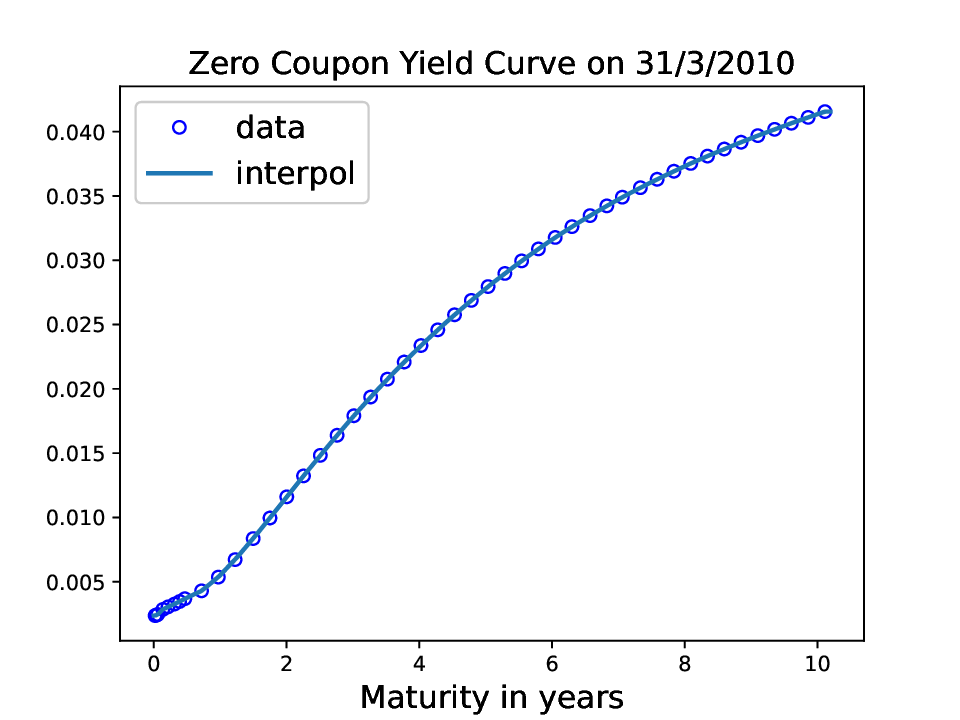"}
\includegraphics[scale=0.4]{"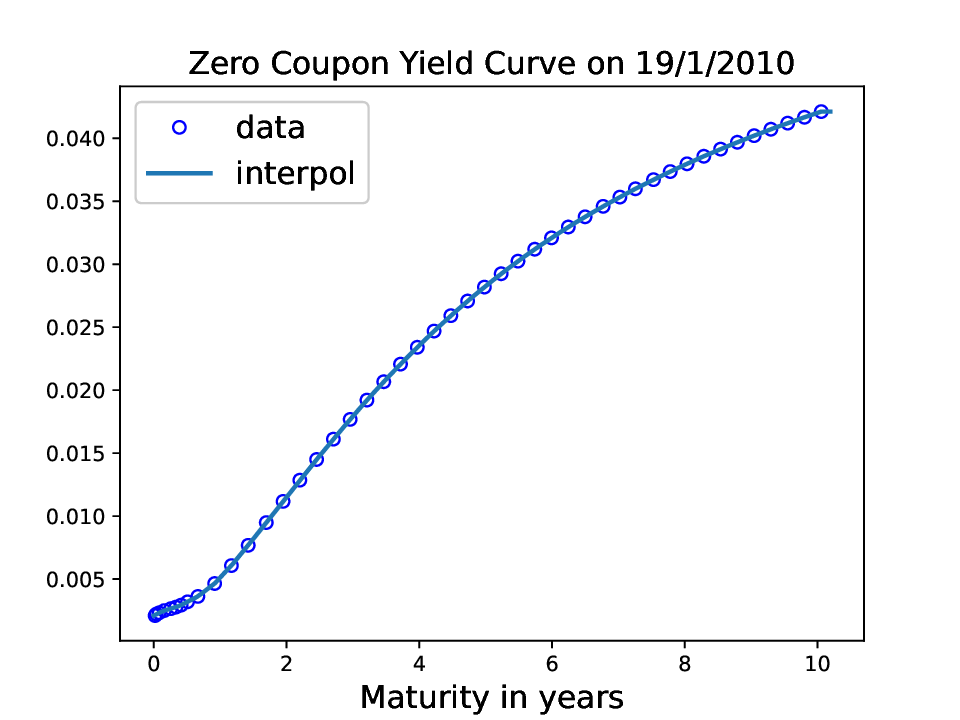"}
\caption{Examples of rates curves over different days from the OptionMetrics rates data.}
\end{figure}
\begin{figure}[H]\label{fig:TimeSeriesYC_WRDS}
\includegraphics[scale=0.5]{"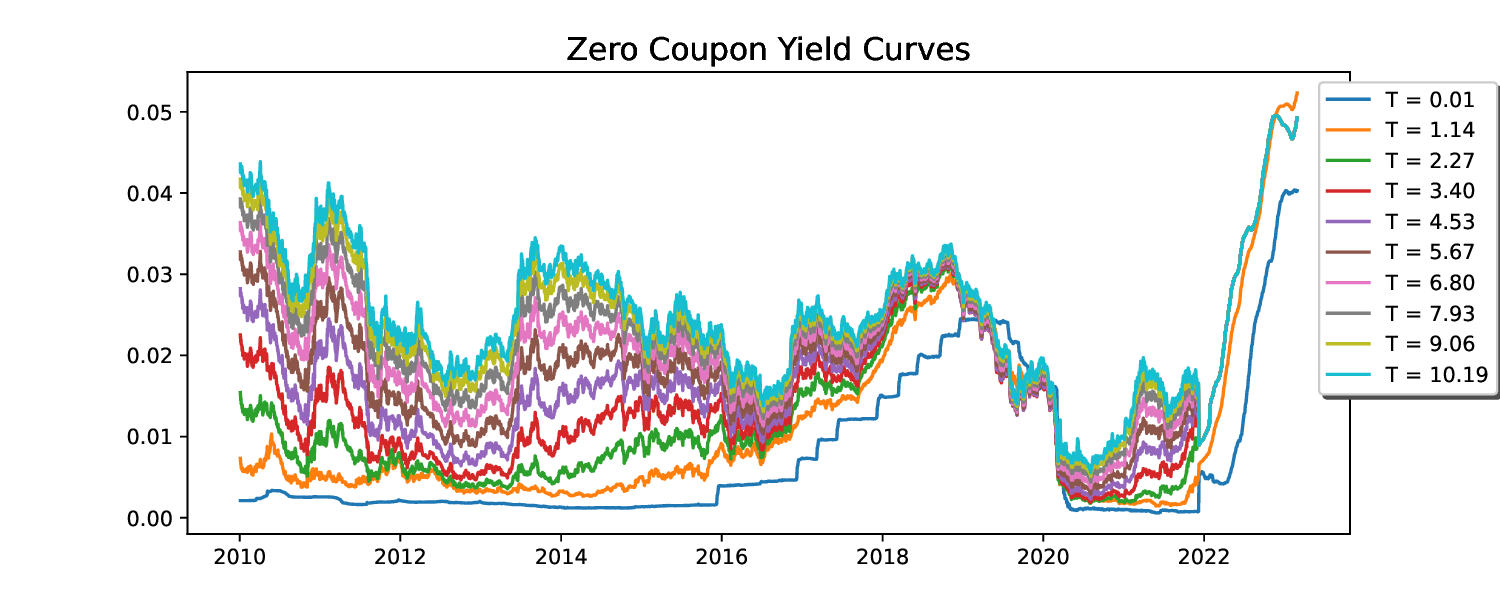"}
\caption{Time series of the OptionMetrics rates for different maturities.}
\end{figure}
\begin{figure}[H]\label{fig:Hurst_WRDS}
\includegraphics[scale=0.5]{"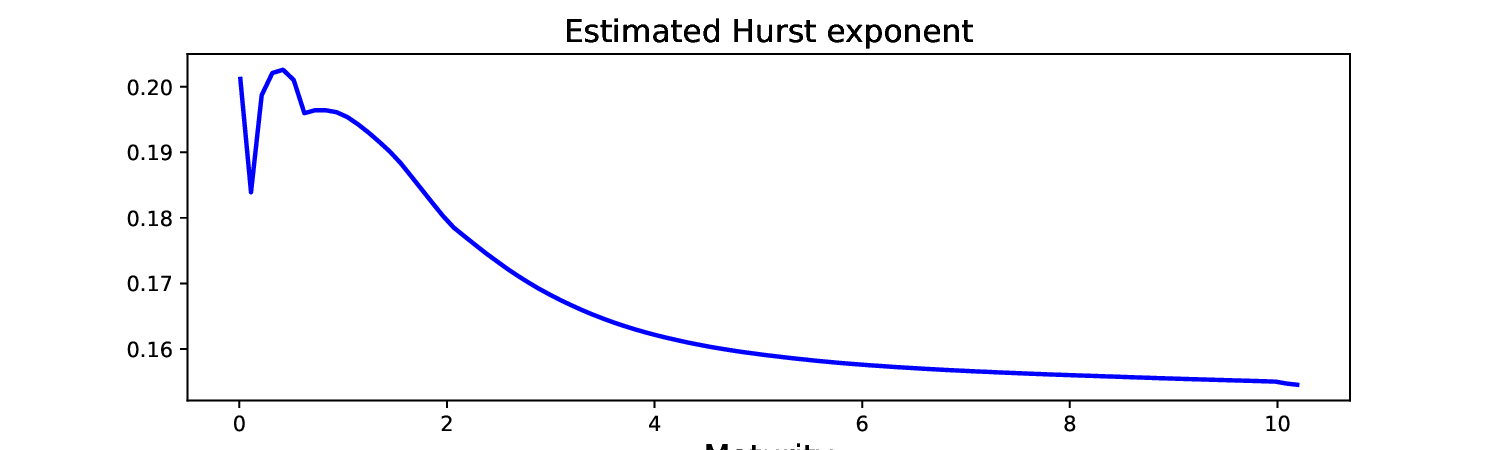"}
\caption{Estimation of the Hurst exponent for the OptionMetrics rates data.}
\end{figure}
A similar analysis on the US Daily Treasury Par Yield Curve Rates~\footnote{data available at 
\href{https://home.treasury.gov/resource-center/data-chart-center/interest-rates/TextView?type=daily_treasury_yield_curve&field_tdr_date_value=2018}{home.treasury.gov/resource-center/data-chart-center/interest-rates}} yields Figures~\ref{fig:TimeSeries_USTreasury} and~\ref{fig:Hurst_USTreasury}.
\begin{figure}[H]\label{fig:TimeSeries_USTreasury}
\includegraphics[scale=0.5]{"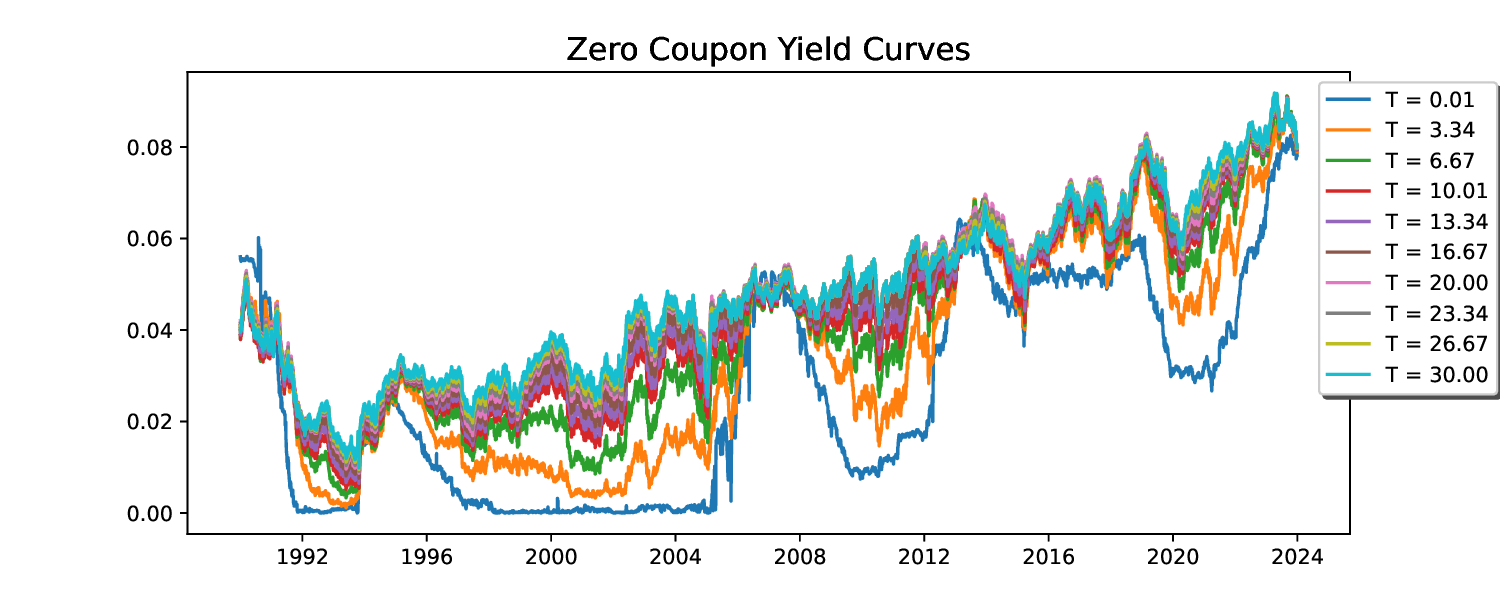"}
\caption{Time series of the  US Treasury rates for different maturities.}
\end{figure}
\begin{figure}[H]\label{fig:Hurst_USTreasury}
\includegraphics[scale=0.5]{"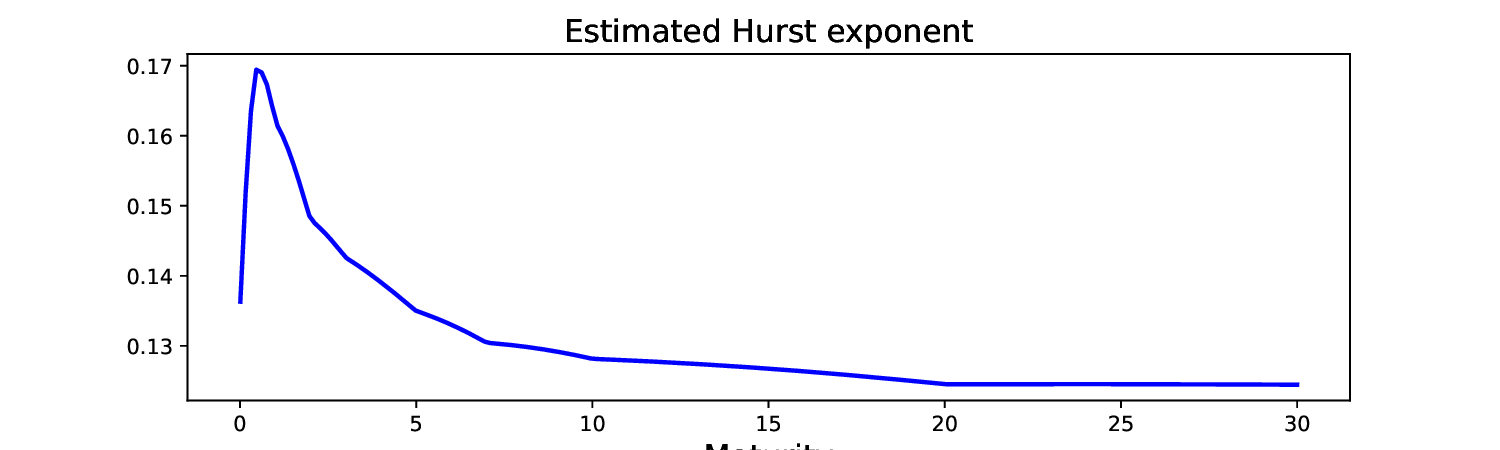"}
\caption{Estimation of the Hurst exponent for the US Treasury rates.}
\end{figure}
}
\section{Gaussian martingale driver}
\label{sec:Gaussian}

\subsection{Dynamics of the zero-coupon bond price}

We assume first that~$\Wf$ is a \blue{continuous} Gaussian martingale with $\gamma_{\Wf}(t):=\EE[\Wf_t^2]$ finite for all $t\geq 0$.
\blue{In this case, the (predictable) quadratic variation process $\gamma_{\Wf}(\cdot)$ is clearly deterministic, but also continuous and increasing, and therefore its derivative $\gamma_{\Wf}'$ exists almost everywhere.}
In order to ensure existence of the rate process in~\eqref{eq:ShortRatedWH}, we assume the following \blue{(we write $\D\lambda$ for the Lebesgue measure on~$\RR_+$)}:
\begin{assumption}\label{assu:Kernel}
For each $t \in [0,T]$, $\varphi(t,\cdot) \in L^1(\D\lambda)\cap L^2(\gamma_{\Wf})$, and~$\varphi$ is of convolution type~\eqref{eq:phiConv}.
\end{assumption}

\begin{lemma}\label{lem:GaussianSemi}
Under Assumption~\ref{assu:Kernel}, 
$\left(\XiT(t,\cdot)\circ \Wf\right)_{t}$ is an $(\Ff_{t})_{t \in [0,T]}$ Gaussian semimartingale.
\end{lemma}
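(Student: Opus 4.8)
The plan is to exhibit an explicit semimartingale decomposition of $M_t := \left(\XiT(t,\cdot)\circ\Wf\right)_t = \int_0^t \XiT(t,u)\,\D\Wf_u$ as the sum of a continuous Gaussian martingale and a continuous finite-variation process, and then to argue Gaussianity of the whole. First I would record the elementary identity, valid for $0\le u\le t\le T$,
$$
\XiT(t,u) = \XiT(u) + \int_u^t \varphi(s,u)\,\D s,
$$
which follows from $\XiT(t,u)=-\int_t^T\varphi(s,u)\D s$ and $\XiT(u)=-\int_u^T\varphi(s,u)\D s$ by splitting the $\D s$-integral at $t$. Substituting this into $M_t$ and using linearity of the stochastic integral gives $M_t = M_t^{(1)} + M_t^{(2)}$, where
$$
M_t^{(1)} := \int_0^t \XiT(u)\,\D\Wf_u
\qquad\text{and}\qquad
M_t^{(2)} := \int_0^t\left(\int_u^t\varphi(s,u)\,\D s\right)\D\Wf_u.
$$

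Under Assumption~\ref{assu:Kernel} the integrand $\XiT$ is bounded (for a convolution kernel, $|\XiT(u)|\le\int_u^T|\varphi(s-u)|\D s\le\|\varphi\|_{L^1}$), so $\XiT\in L^2(\gamma_\Wf)$ on $[0,t]$ and $M^{(1)}$ is a well-defined Wiener integral of a deterministic integrand against the continuous Gaussian martingale $\Wf$; it is therefore itself a continuous Gaussian $(\Ff_t)$-martingale. The crux is to show that $M^{(2)}$ is of finite variation. Here I would invoke the stochastic Fubini theorem to interchange the $\D s$ and $\D\Wf_u$ integrations over the triangle $\{0\le u\le s\le t\}$, obtaining
$$
M_t^{(2)} = \int_0^t\left(\int_0^s\varphi(s,u)\,\D\Wf_u\right)\D s = \int_0^t\bigl(r_s-\theta(s)\bigr)\,\D s,
$$
the last equality being the defining relation~\eqref{eq:ShortRatedWH}. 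This representation makes $M^{(2)}$ absolutely continuous in $t$, hence of finite variation on $[0,T]$, so that $M = M^{(1)} + M^{(2)}$ is a semimartingale.

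I expect the verification of the hypotheses of the stochastic Fubini theorem to be the main obstacle. It requires the joint integrability $\int_0^t\int_0^s\varphi(s,u)^2\,\D\gamma_\Wf(u)\,\D s<\infty$ (equivalently, local $L^2$-integrability in $s$ of the variance map $s\mapsto\int_0^s\varphi(s,u)^2\,\D\gamma_\Wf(u)$), which goes beyond the pointwise-in-$t$ control stated in Assumption~\ref{assu:Kernel}; I would recover it from the $L^1(\D\lambda)\cap L^2(\gamma_\Wf)$ bounds together with the convolution structure, after first checking measurability and local boundedness of this variance map so that the $\D s$-integral is finite.

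Finally, for Gaussianity: both $M^{(1)}_t$ and $M^{(2)}_t$ are limits in probability of linear functionals of the Gaussian process $\Wf$ (the former a Wiener integral, the latter an $\D s$-average of Wiener integrals), so the family $(M_t)_{t\in[0,T]}$ has jointly Gaussian finite-dimensional distributions. Combined with the decomposition above, this shows that $M$ is a Gaussian $(\Ff_t)_{t\in[0,T]}$-semimartingale, as claimed.
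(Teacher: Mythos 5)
Your argument is correct in substance, but it takes a genuinely different route from the paper's proof of Lemma~\ref{lem:GaussianSemi}. The paper exhibits no decomposition at this stage: it uses the convolution form~\eqref{eq:phiConv} to rewrite the process as $-\int_{0}^{t}\left[\Phi(t-u)-\Phi(T-u)\right]\D\Wf_{u}$, recognizes this as a two-sided moving average in the sense of~\cite[Section 5.2]{basse2010path}, checks that Assumption~\ref{assu:Kernel} gives absolute continuity with derivative in $L^{2}(\gamma_{\Wf})$, and concludes by citing the semimartingale characterization of~\cite[Theorem 5.5]{basse2010path}. Your proof is instead self-contained: splitting $\XiT(t,u)=\XiT(u)+\int_{u}^{t}\varphi(s,u)\D s$ and applying stochastic Fubini yields the explicit decomposition into the Gaussian martingale $\int_{0}^{t}\XiT(u)\D\Wf_{u}$ plus the absolutely continuous process $\int_{0}^{t}\left(r_{s}-\theta(s)\right)\D s$. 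This is precisely the decomposition the paper itself derives later, inside the proof of Proposition~\ref{prop:ZeroCoupon}, where it is identified as a Dirichlet-type decomposition and leads to~\eqref{eq:dXHalf}. So your route proves the lemma and the structural fact used downstream in one stroke, at the price of justifying the Fubini interchange, while the paper's route gets the qualitative statement quickly by outsourcing the work to a known theorem on Gaussian moving averages.

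The caveat you flag is real, and as written it remains a gap: the interchange requires the joint integrability
\begin{equation*}
\int_{0}^{t}\int_{0}^{s}\varphi(s-u)^{2}\,\D\gamma_{\Wf}(u)\,\D s
=\int_{0}^{t}\left(\int_{0}^{t-u}\varphi(v)^{2}\D v\right)\D\gamma_{\Wf}(u)<\infty,
\end{equation*}
whereas Assumption~\ref{assu:Kernel} only guarantees that $\int_{0}^{s}\varphi(s-u)^{2}\D\gamma_{\Wf}(u)$ is finite for each fixed~$s$; integrability of this quantity in~$s$ is not a formal consequence of the pointwise statement, and your claim that it can be ``recovered from the convolution structure'' is asserted rather than proved. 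It does hold as soon as $\varphi\in L^{2}_{\mathrm{loc}}(\D\lambda)$, since then the inner integral above is bounded by $\int_{0}^{t}\varphi(v)^{2}\D v$ uniformly in~$u$; and Assumption~\ref{assu:Kernel} does imply $\varphi\in L^{2}_{\mathrm{loc}}(\D\lambda)$ whenever $\gamma_{\Wf}'$ is bounded away from zero on compacts, which covers the Brownian and Ornstein--Uhlenbeck drivers appearing in the paper. The same condition also gives, via Cauchy--Schwarz, the a.s.\ local integrability of $s\mapsto r_{s}-\theta(s)$ needed for your finite-variation claim. To complete your proof, state this joint integrability (or $\varphi\in L^{2}_{\mathrm{loc}}(\D\lambda)$) as an explicit hypothesis or verify it in the cases at hand. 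To be fair, the paper's own proof of Proposition~\ref{prop:ZeroCoupon} performs exactly the same interchange ``by Fubini and Assumption~\ref{assu:Kernel}'' and is exposed to the same issue; its proof of Lemma~\ref{lem:GaussianSemi} via~\cite{basse2010path}, however, avoids it.
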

\begin{proof}
From~\eqref{eq:Phi}, 
$\XiT$ is in general not in convolution form~\eqref{eq:phiConv}. 
However, since ~$\varphi$ is, we can write  
$$
\XiT(t,u) := -\int_{t}^{T}\varphi(s,u)\D s
 = -\int_{t}^{T}\varphi(s-u)\D s
 = \Phi(T-u) - \Phi(t-u),
$$
where the function~$\Phi$
is defined as
$\Phi(z) := - \int_{\cdot}^{z}\varphi(u)\D u$.
The stochastic integral then reads
$$
\left(\XiT(t,\cdot)\circ \Wf\right)_{t}
 = \int_{0}^{t}\XiT(t,u)\D \Wf_{u}
 = -\int_{0}^{t}\left[\Phi(t-u) - \Phi(T-u)\right]\D \Wf_{u},
$$
which corresponds to a two-sided moving average process in the sense of~\cite[Section 5.2]{basse2010path}.
Assumption~\ref{assu:Kernel} then implies that for each $t \in [0,T]$, the function~$\XiT(t,\cdot)$ is absolutely continuous on $[0,t]$
and $\partial_{t}\XiT(t,\cdot) \in L^2(\gamma_{\Wf})$ and the statement follows from~\cite[Theorem 5.5]{basse2010path}.
\end{proof}
\begin{remark}\ 
\begin{itemize}
\item The $L^2$ property ensures that the stochastic integral $(\varphi(t-\cdot)\circ \Wf)_t$ is well defined.
\item The assumption does not imply that the short rate itself, while Gaussian, is a semimartingale.
\end{itemize}
\end{remark}

\begin{proposition}\label{prop:ZeroCoupon}
The price of the zero-coupon bond  at time~$t$ reads 
$$
P_{t,T} 
 = 
\exp\left\{-\Theta_{t,T} + \half\int_{t}^{T}\XiT(u)^2 \D u + \left(\XiT(t,\cdot)\circ \Wf\right)_{t}\right\},
$$
and the discounted bond price
$\Ptilde_{t,T} := P_{t,T}\exp\left\{-\int_{0}^{t}r_s \D s\right\}$
is a $\QQ$-martingale satisfying
$$
\frac{\D \Ptilde_{t,T}}{\Ptilde_{t,T}} =  \XiT(t)\ \D \Wf_t.
$$
\end{proposition}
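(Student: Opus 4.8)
The plan is to reduce the whole statement to the single scalar quantity $\int_{t}^{T} r_s \D s$ and then to exploit the Gaussian martingale structure of~$\Wf$. First I would substitute~\eqref{eq:ShortRatedWH} and apply a stochastic Fubini theorem to interchange the Lebesgue integral in~$s$ with the stochastic integral in~$u$. Since for fixed~$u$ the variable~$s$ ranges over $[t,T]$ when $u\le t$ and over $[u,T]$ when $t<u\le T$, this produces the split
\begin{equation*}
\int_{t}^{T} r_s \D s
 = \Theta_{t,T} + \int_{0}^{t}\Big(\int_{t}^{T}\varphi(s,u)\D s\Big)\D\Wf_u + \int_{t}^{T}\Big(\int_{u}^{T}\varphi(s,u)\D s\Big)\D\Wf_u.
\end{equation*}
By the definition~\eqref{eq:Phi} the two inner integrals are $-\XiT(t,u)$ and $-\XiT(u)$ respectively, so that
\begin{equation*}
\int_{t}^{T} r_s \D s = \Theta_{t,T} - \left(\XiT(t,\cdot)\circ\Wf\right)_{t} - \int_{t}^{T}\XiT(u)\D\Wf_u .
\end{equation*}

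The first formula then follows by conditioning on~$\Ff_t$. The term $\left(\XiT(t,\cdot)\circ\Wf\right)_{t}$ is $\Ff_t$-measurable and factors out of $\EE^{\QQ}_{t}$, while the ``future'' integral $\int_{t}^{T}\XiT(u)\D\Wf_u$, having a deterministic integrand against a Gaussian martingale with independent increments, is independent of~$\Ff_t$ and centred Gaussian with variance $\int_{t}^{T}\XiT(u)^2\D\gamma_{\Wf}(u)$. Applying the Gaussian Laplace transform $\EE[\E^{X}]=\exp\{\half\operatorname{Var}(X)\}$ to this factor produces exactly the deterministic convexity correction $\half\int_{t}^{T}\XiT(u)^2\D\gamma_{\Wf}(u)$ (the term written $\half\int_{t}^{T}\XiT(u)^2\D u$ in the statement, with~$\gamma_{\Wf}$ as integrator), yielding the claimed expression for~$P_{t,T}$.

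For the martingale statement I would run the identical Fubini computation on $\int_{0}^{t} r_s \D s$, where the inner variable ranges over $[u,t]$, giving
\begin{equation*}
\int_{0}^{t} r_s \D s = \Theta_{0,t} + \left(\XiT(t,\cdot)\circ\Wf\right)_{t} - \int_{0}^{t}\XiT(u)\D\Wf_u .
\end{equation*}
Forming $\log\Ptilde_{t,T}=\log P_{t,T}-\int_{0}^{t} r_s \D s$, the $\Ff_t$-measurable integral $\left(\XiT(t,\cdot)\circ\Wf\right)_{t}$ cancels and the deterministic parts combine to $-\Theta_{0,T}$, leaving
\begin{equation*}
\Ptilde_{t,T}
 = \E^{-\Theta_{0,T}}\exp\left\{\int_{0}^{t}\XiT(u)\D\Wf_u + \half\int_{t}^{T}\XiT(u)^2\D\gamma_{\Wf}(u)\right\}
 = \E^{-\Theta_{0,T}+\half\langle M\rangle_T}\,\exp\left\{M_t-\half\langle M\rangle_t\right\},
\end{equation*}
where $M_t:=\int_{0}^{t}\XiT(u)\D\Wf_u$ has $\langle M\rangle_t=\int_{0}^{t}\XiT(u)^2\D\gamma_{\Wf}(u)$. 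Thus $\Ptilde_{t,T}$ is a deterministic multiple of the stochastic exponential of~$M$, hence a $\QQ$-local martingale, and Itô's formula gives $\D\Ptilde_{t,T}=\Ptilde_{t,T}\D M_t=\Ptilde_{t,T}\,\XiT(t)\,\D\Wf_t$, which is the asserted ratio dynamics.

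Two points require care. The main obstacle is the rigorous justification of the two stochastic Fubini interchanges: under Assumption~\ref{assu:Kernel} one must verify joint measurability together with an integrability bound such as $\int_{0}^{T}\big(\int_{0}^{s}\varphi(s,u)^2\D\gamma_{\Wf}(u)\big)^{1/2}\D s<\infty$, so that the relevant stochastic Fubini theorem applies; this is precisely where the $L^1\cap L^2$ kernel hypotheses are genuinely used. The second, more routine, point is upgrading the stochastic exponential from a local to a true $\QQ$-martingale; this follows from the Gaussianity of~$M$, whose exponential moments are finite because $\langle M\rangle_T$ is a deterministic constant, so a Novikov-type condition holds automatically.
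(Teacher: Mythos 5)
Your proof is correct. The first half — the closed-form expression for $P_{t,T}$ — is essentially the paper's own argument: the same stochastic Fubini splitting of $\int_t^T r_s \D s$ into an $\Ff_t$-measurable integral plus a future integral with deterministic integrand, followed by the Gaussian Laplace transform; your reading of the statement's $\D u$ as $\D\gamma_{\Wf}(u)$ is also the right one, since it is forced by consistency with the $\gamma'_{\Wf}$ appearing later in Proposition~\ref{cor:SDEProd} and the two coincide only when $\gamma_{\Wf}(u)=u$. Where you genuinely diverge is the martingale claim. The paper differentiates $\log P_{t,T}$ directly: it decomposes $\left(\XiT(t,\cdot)\circ\Wf\right)_t$ into the local martingale $\int_0^t \XiT(u)\D\Wf_u$ plus an absolutely continuous remainder, invokes the theory of $L^1$-Dirichlet processes to identify the quadratic variation, obtains $\D P_{t,T}/P_{t,T} = r_t\,\D t + \XiT(t)\D\Wf_t$ via It\^o's formula, and then discounts, the $r_t\,\D t$ drift cancelling against the discount factor. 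You instead apply the same Fubini identity to $\int_0^t r_s\D s$, so that the awkward $\Ff_t$-measurable term $\left(\XiT(t,\cdot)\circ\Wf\right)_t$ cancels algebraically inside $\log\Ptilde_{t,T}$, leaving an explicit representation of $\Ptilde_{t,T}$ as a deterministic constant times the Dol\'eans-Dade exponential of $M_t = \int_0^t\XiT(u)\D\Wf_u$. This buys you two things: you bypass the Dirichlet-process machinery entirely (you never need to differentiate the rough term), and you obtain the \emph{true} martingale property for free from Novikov's condition with a deterministic bracket — a point the paper's proof leaves implicit, since exhibiting the SDE $\D\Ptilde_{t,T}/\Ptilde_{t,T} = \XiT(t)\D\Wf_t$ by itself only shows that $\Ptilde_{\cdot,T}$ is a local martingale. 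What the paper's route buys in exchange is the undiscounted dynamics $\D P_{t,T}/P_{t,T} = r_t\,\D t + \XiT(t)\D\Wf_t$ as a by-product, which is reused later (notably in the proof of Proposition~\ref{cor:SDEProd}); from your representation one would recover it by the extra, though routine, step of applying It\^o to $P_{t,T} = \Ptilde_{t,T}\exp\left\{\int_0^t r_s\D s\right\}$.
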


\begin{corollary}\label{cor:InstFwd}
The instantaneous forward rate satisfies $f_{TT}=r_T$ and, for all $t \in [0,T)$,
$$
f_{t,T} = \theta(T) +\int_{0}^{t}\varphi(T,u)\D \Wf_u + \int_{t}^{T}\varphi(T,u)\XiT(u)\D u.
$$
\end{corollary}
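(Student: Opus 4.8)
The plan is to start from the closed-form expression for $P_{t,T}$ in Proposition~\ref{prop:ZeroCoupon}, take logarithms, and differentiate the result with respect to~$T$ term by term, exploiting the definition $f_{t,T} = -\partial_T \log P_{t,T}$ from~\eqref{eq:FwdRateDef}. Writing
$$
\log P_{t,T} = -\Theta_{t,T} + \half\int_t^T \XiT(u)^2 \D u + \left(\XiT(t,\cdot)\circ\Wf\right)_t,
$$
I would treat the three summands separately. The deterministic drift gives $\partial_T \Theta_{t,T} = \theta(T)$ at once, producing the $\theta(T)$ term. For the remaining two pieces I would first record the elementary identities $\partial_T \XiT(t,u) = -\varphi(T,u)$ and $\partial_T \XiT(u) = -\varphi(T,u)$ (valid for almost every~$T$ by the Lebesgue differentiation theorem, since $\varphi(t,\cdot)\in L^1(\D\lambda)$ under Assumption~\ref{assu:Kernel}), together with the boundary value $\XiT(T) = \XiT(T,T) = 0$.

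For the variance term I would apply the Leibniz rule:
$$
\partial_T\left[\half\int_t^T \XiT(u)^2 \D u\right]
 = \half \XiT(T)^2 + \int_t^T \XiT(u)\,\partial_T\XiT(u)\,\D u
 = -\int_t^T \varphi(T,u)\XiT(u)\D u,
$$
where the boundary contribution vanishes because $\XiT(T)=0$; applying $-\partial_T$ then recovers the last term of the claimed formula. For the stochastic integral (whose upper limit~$t$ does not depend on~$T$) I would differentiate under the integral sign,
$$
\partial_T \int_0^t \XiT(t,u)\D\Wf_u = \int_0^t \partial_T\XiT(t,u)\D\Wf_u = -\int_0^t \varphi(T,u)\D\Wf_u,
$$
so that $-\partial_T$ produces the middle term. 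Collecting the three contributions yields exactly $f_{t,T} = \theta(T) + \int_0^t \varphi(T,u)\D\Wf_u + \int_t^T \varphi(T,u)\XiT(u)\D u$. To obtain $f_{TT} = r_T$ I would then set $t=T$: the last integral $\int_T^T \varphi(T,u)\XiT(u)\D u$ collapses to zero, leaving $f_{T,T} = \theta(T) + \int_0^T \varphi(T,u)\D\Wf_u$, which is precisely $r_T$ by the definition~\eqref{eq:ShortRatedWH}.

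The main obstacle is rigorously justifying the interchange of $\partial_T$ with the stochastic integral $\int_0^t \cdot\,\D\Wf_u$, which is not a pathwise Leibniz rule since~$\Wf$ carries no differentiable structure. I would establish it through an $L^2$-convergence argument: the difference quotients in~$T$ of $\XiT(t,u)$ converge to $\partial_T\XiT(t,u) = -\varphi(T,u)$ in $L^2(\gamma_{\Wf})$, and the It\^o isometry transfers this convergence to the corresponding stochastic integrals, identifying the limit of the difference quotients of $\left(\XiT(t,\cdot)\circ\Wf\right)_t$ with $\int_0^t \partial_T\XiT(t,u)\D\Wf_u$. The $L^2(\gamma_{\Wf})$-integrability of $\varphi(T,\cdot)$ guaranteed by Assumption~\ref{assu:Kernel}, together with the absolute continuity of $\XiT(t,\cdot)$ already exploited in the proof of Lemma~\ref{lem:GaussianSemi}, are exactly the ingredients that make this step legitimate.
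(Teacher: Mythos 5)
Your proof is correct and follows essentially the same route as the paper's: differentiate $-\partial_T \log P_{t,T}$ term by term using the closed-form expression from Proposition~\ref{prop:ZeroCoupon}, with the Leibniz rule handling the variance term (whose boundary contribution vanishes since $\XiT(T)=0$) and differentiation under the stochastic integral producing the middle term. Your additional $L^2$-isometry justification for interchanging $\partial_T$ with the stochastic integral, and the explicit verification of $f_{T,T}=r_T$, go beyond the paper's proof (which performs these steps without comment) but do not change the argument.
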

In differential form, for any fixed $T>0$, for $t \in [0,T]$, this is equivalent to
$$
\D f_{t,T} = \varphi(T-t)\D \Wf_t - \varphi(T-t)\XiT(t)\D t.
$$

\begin{algo}
For simulation purposes, we consider a time grid $\Tt:=\{0=t_0<t_1<\cdots<t_N=T\}$ and discretise the stochastic integral along this grid with left-point approximations as
$$
\left(\XiT(t_i,\cdot)\circ \Wf\right)_{t_i}
 = \int_{0}^{t_i} \XiT(t_i,u)\D \Wf_u 
\approx \sum_{k=0}^{i-1} \XiT(t_i, t_k) \left(\Wf_{t_{k+1}} - \Wf_{t_k}\right),
\qquad\text{for each }i = 1,\ldots, N.
$$
The vector 
$\left(\XiT(t_i,\cdot)\circ \Wf\right)_{t_i \in\Tt}
$ of stochastic integrals 
can then be simulated along the grid directly as
$$
\begin{pmatrix}
\left(\XiT(t_1,\cdot)\circ \Wf\right)_{t_1}\\
\vdots\\
\left(\XiT(t_N,\cdot)\circ \Wf\right)_{t_N}\\
\end{pmatrix}
 \approx
\begin{pmatrix}
\XiT(t_{1}, t_{0}) &  & & & \\
\XiT(t_{2}, t_{0}) & \XiT(t_{2}, t_{1}) & & & \\
\vdots & \ddots & \ddots & & \\
\XiT(t_{N-1}, t_{0}) & \XiT(t_{N-1}, t_{1}) & \ldots & \XiT(t_{N-1}, t_{N-2}) & \\
\XiT(t_{N}, t_{0}) & \XiT(t_{N}, t_{1}) & \ldots & \XiT(t_{N}, t_{N-1})
\end{pmatrix}
\begin{pmatrix}
\Wf_{t_{1}} - \Wf_{t_0}\\
\vdots\\
\Wf_{t_{N}} - \Wf_{t_{N-1}}
\end{pmatrix},
$$
where the middle matrix is lower triangular (we omit the null terms everywhere for clarity).
\end{algo}

\begin{example}\label{ex:Example}
With $\varphi(t) = \sigma\E^{-\kappa t}$, for $\sigma>0$, 
$\theta(t) := r_{0}\E^{-\kappa t} + \mu\left(1-\E^{-\kappa t}\right)$ and $\Wf=W$ a Brownian motion,
we recover exactly the Vasicek model~\cite{vasicek1977equilibrium}, namely
$r_t = r_0 + \kappa \int_{0}^{t}(\mu-r_s)\D s + \sigma W_t$.
\end{example}

\blue{
\begin{example}\label{ex:Example}
Consider the extension of the Vasicek model proposed by Hull and White~\cite{hull1990pricing}, where
$
\D r_{t} = (\zeta(t) - a(t)r_{t})\D t + \sigma(t)\D W_t,
$
where $\zeta(\cdot), a(\cdot)$ and~$\sigma(\cdot)$ are sufficiently smooth deterministic functions of time.
Direct computations yield the solution,
with $A(t):=\int_{0}^{t}a(s)\D s$,
$$
r_t = r_{0} + \int_{0}^{t}
\E^{-(A(t)-A(s))}\zeta(s)\D s +\int_{0}^{t}\E^{-(A(t)-A(s))}\sigma(s)\D W_s.
$$
Letting
$$
\theta(t) := r_{0} + \int_{0}^{t}
\E^{-(A(t)-A(s))}\zeta(s)\D s,
\qquad
\varphi(t,s) := \E^{-(A(t)-A(s))}
\qquad\text{and}\qquad
\D\Wf_t = \sigma(t)\D W_t
$$
makes it coincide exactly with our setup in~\eqref{eq:ShortRatedWH}.
Now Assumption~\ref{assu:Kernel} holds if and only if $A(t)-A(s) = A(t-s)$ for all $0\leq s\leq t$, 
namely when the function~$a$ is linear or constant.
Note that, as mentioned in~\cite[Section~3.3]{brigo2006interest}, the function~$a$ is often assumed constant in practice.
\end{example}
}

\begin{proof}[Proof of Proposition~\ref{prop:ZeroCoupon}]
The price of the zero-coupon bond at time~$t$ then reads
\begin{align}\label{eq:PtTH}
P_{t,T} := \EE^{\QQ}_{t}\left[\exp\left\{-\int_{t}^{T}r_s \D s \right\}\right]
&  = \EE^{\QQ}_{t}\left[\exp\left\{-\int_{t}^{T}\left(\theta(s) + \int_{0}^{s}\varphi(s,u)\D \Wf_u\right) \D s \right\}\right]\nonumber\\
 & = \E^{-\Theta_{t,T}}\EE^{\QQ}_{t}\left[\exp\left\{-\int_{t}^{T}\left(\int_{0}^{s}\varphi(s,u)\D \Wf_u \right)\D s \right\}\right].
\end{align}
Using Fubini, we can write
\begin{align}\label{eq:Fubini}
-\int_{t}^{T}\left(\int_{0}^{s}\varphi(s,u) \D \Wf_u \right)\D s 
 &  = -\int_{0}^{t}\left(\int_{t}^{T}\varphi(s,u)\D s \right)\D \Wf_u
 - \int_{t}^{T}\left(\int_{u}^{T}\varphi(s,u)\D s \right)\D \Wf_u\\
 &  = \int_{0}^{t}\XiT(t,u)\D \Wf_u
 + \int_{t}^{T}\XiT(u)\D \Wf_u,\nonumber
\end{align}
using~\eqref{eq:Phi}.
Plugging this into~\eqref{eq:PtTH}, the zero-coupon bond then reads
\begin{align*}
P_{t,T} 
 &  = \E^{-\Theta_{t,T}} \exp\left\{\int_{0}^{t}\XiT(t,u)\D \Wf_u\right\}
\EE^{\QQ}_{t}\left[ \exp\left\{\int_{t}^{T}\XiT(u)\D \Wf_u\right\}\right]\\
 & = \E^{-\Theta_{t,T}} \exp\Big\{\left(\XiT(t,\cdot)\circ \Wf\right)_{t}\Big\}
\EE^{\QQ}_{t}\left[ \E^{\left(\XiT\circ \Wf\right)_{t,T}}\right].
\end{align*}
Conditional on~$\Ff_t$,  $(\XiT\circ \Wf)_{t,T}$ is centered Gaussian with
$\VV_t[(\XiT\circ \Wf)_{t,T}] = \int_{t}^{T}\XiT(u)^2 \D u$,
hence
$
P_{t,T} 
 = \E^{-\Theta_{t,T}}
\exp\left\{\left(\XiT(t,\cdot)\circ \Wf\right)_{t} + \half\int_{t}^{T}\XiT(u)^2 \D u\right\}.
$
By Fubini and Assumption~\ref{assu:Kernel},
\begin{align*}
\left(\XiT(t,\cdot)\circ \Wf\right)_{t} 
 = \int_{0}^{t}\XiT(t,u)\D \Wf_u 
 & = \int_{0}^{t}\left(\XiT(u) + \int_{u}^{t}\partial_{s}\XiT(s,u)\D s\right)\D \Wf_u\\
 & = \int_{0}^{t}\XiT(u)\D \Wf_u + \int_{0}^{t}\int_{u}^{t}\partial_{s}\XiT(s,u)\D s\D \Wf_u\\
 & = \int_{0}^{t}\XiT(u)\D \Wf_u + \int_{0}^{t}\int_{0}^{s}\partial_{s}\XiT(s,u)\D \Wf_u\D s\\
 & = \int_{0}^{t}\XiT(u)\D \Wf_u + \int_{0}^{t}\int_{0}^{s}\varphi(s,u)\D \Wf_u\D s.
\end{align*}
This is an $L^1$-Dirichlet process~\cite[Definition 2]{russo2006bifractional},
written as a decomposition of a local martingale and a term with zero quadratic variation. 
Therefore $\langle \log(P_{\cdot,T}), \log(P_{\cdot,T})\rangle_t = \int_{0}^{t}\XiT(u)^2\D u$ and
\begin{equation}\label{eq:dXHalf}
\D \log(P_{\cdot,T})
 = \left(\theta(t) + \left(\partial_t\XiT(t,\cdot)\circ \Wf\right)_{t} - \half\XiT(t)^2\right)\D t + \XiT(t)\D \Wf_t.
\end{equation}
Now, It\^o's formula \blue{with $X_t := \log(P_{t,T})$}, using~\eqref{eq:dXHalf} yields
$P_{T,T} = P_{t,T} + \int_{t}^{T}P_{s,T}\D X_s + \half\int_{t}^{T}P_{s,T}\D\langle X,X\rangle_{s}$, hence, for each $T>0$,
$\D P_{T,T} = \D P_{t,T} -P_{t,T}\D X_t - \half P_{t,T}\D\langle X,X\rangle_{t}$,
and therefore, since $P_{T,T}=1$,
\begin{align*}
\frac{\D P_{t,T}}{P_{t,T}} & = \D X_t + \half \D\langle X,X\rangle_{t}
   = \left(\underbrace{\theta(t) + \left(\partial_t\XiT(t,\cdot)\circ \Wf\right)_{t}}_{r_t} - \half\XiT(t)^2\right)\D t + \XiT(t)\D \Wf_t
+\half \D\left(\int_{0}^{t}\XiT(u)^2 \D u\right)\\
 &  = r_t\D t + \XiT(t)\D \Wf_t - \half\XiT(t)^2\D t
+\half\XiT(t)^2\D t
  = r_t\D t + \XiT(t)\D \Wf_t.
\end{align*}
The dynamics of the discounted zero-coupon bond price in the lemma follows immediately.
\end{proof}

\begin{proof}[Proof of Corollary~\ref{cor:InstFwd}]
It follows by direct computation starting from the instantaneous forward rate~\eqref{eq:FwdRateDef}:
\begin{align*}
f_{t,T} 
 & = \partial_{T}\Theta_{t,T} -\partial_{T} \int_{0}^{t}\XiT(t,u)\D \Wf_{u} - \half\partial_{T}\int_{t}^{T}\XiT(u)^2 \D u\\
 & = \partial_{T}\Theta_{t,T} -\partial_{T} \int_{0}^{t}\left(-\int_{t}^{T}\varphi(s,u)\D s\right)\D \Wf_u - \half\partial_{T}\int_{t}^{T}\left(-\int_{u}^{T}\varphi(s,u)\D s\right)^2 \D u\\
 & = \theta(T) +\int_{0}^{t}\partial_{T}\left(\int_{t}^{T}\varphi(s,u)\D s\right)\D \Wf_u - \half\partial_{T}\int_{t}^{T}\left(\int_{u}^{T}\varphi(s,u)\D s\right)^2 \D u\\
 & = \theta(T) +\int_{0}^{t}\varphi(T,u)\D \Wf_u - \half\left(\int_{T}^{T}\varphi(s,T)^2\D s + \int_{t}^{T}\partial_{T}\left[\left(\int_{u}^{T}\varphi(s,u)\D s\right)^2\right] \D u\right)
\\
 & = \theta(T) +\int_{0}^{t}\varphi(T,u)\D \Wf_u - \int_{t}^{T}\varphi(T,u)\left(\int_{u}^{T}\varphi(s,u)\D s\right) \D u\\
 & = \theta(T) +\int_{0}^{t}\varphi(T,u)\D \Wf_u + \int_{t}^{T}\varphi(T,u)\XiT(u)\D u.
\end{align*}
\end{proof}

\begin{remark}
The two lemmas above correspond to the two sides of the Heath-Jarrow-Morton framework.
From the expression of the instantaneous forward rate, let
$\alpha_{t,T} := \varphi(T-t)\XiT(t)$
and $\beta_{t,T} := \varphi(T-t)$,
so that
$\D f_{t,T} = \beta_{t,T}\D \Wf_t -\alpha_{t,T}\D t$,
and consider the discounted bond price
$$
\Ptilde_{t,T} := P_{t,T}\exp\left\{-\int_{0}^{t}r_s \D s\right\}
 = \exp\left\{-\int_{0}^{t}r_s \D s - \int_{t}^{T}f_{t,s}\D s\right\}
 =: \E^{Z_t}.
$$
It\^os' formula then yields
\begin{equation}\label{eq:SDEPTilde}
\frac{\D \Ptilde_{t,T}}{\Ptilde_{t,T}} = \D Z_t + \half \D\langle Z,Z\rangle_t.
\end{equation}
From the differential form of $f_{t,T}$, we can write, for any $t \in [0,T)$,
$$
f_{t,T} = f_{0,T} + \int_{0}^{t}\D f_{s,T}
 = f_{0,T} + \int_{0}^{t}\Big(\varphi(T,u)\D \Wf_u - \varphi(T,u)\XiT(u)\D u\Big)
 = f_{0,T} + \int_{0}^{t}\beta_{u,T}\D \Wf_u + \int_{0}^{t}\alpha_{u,T}\D u,
$$
so that, using stochastic Fubini, we obtain
\begin{align*}
F_{t,T} & := \int_{t}^{T}f_{t,s}\D s
 =  \int_{t}^{T}\left(f_{0,s} + \int_{0}^{t}\beta_{u,s}\D \Wf_u + \int_{0}^{t}\alpha_{u,s}\D u\right)\D s\\
&  =  \int_{t}^{T}f_{0,s}\D s + \int_{0}^{t}\int_{t}^{T}\beta_{u,s}\D s\D \Wf_u + \int_{0}^{t}\int_{t}^{T}\alpha_{u,s}\D s\D u.
\end{align*}
Now, 
\begin{align*}
\int_{t}^{T}f_{0,s}\D s
 & = \int_{t}^{T}\left(f_{s,s} - \int_{0}^{s}\partial_{u} f_{u,s} \D u\right)\D s\\
 & = \int_{t}^{T}r_s \D s - \int_{0}^{t} \int_{t}^{T}\partial_{u} f_{u,s} \D s \D u - \int_{t}^{T} \int_{u}^{T}\partial_{u} f_{u,s} \D s \D u\\
 & = \int_{t}^{T}r_s \D s - \int_{0}^{t} \left(\int_{t}^{T}\partial_{u} f_{u,s} \D s - \int_{u}^{T}\partial_{u} f_{u,s} \D s \right)\D u - \int_{0}^{T} \int_{u}^{T}\partial_{u} f_{u,s} \D s \D u\\
 & = \int_{t}^{T}r_s \D s + \int_{0}^{t} \int_{u}^{t}\partial_{u} f_{u,s} \D s \D u - \int_{0}^{T} \int_{u}^{T}\partial_{u} f_{u,s} \D s \D u,
\end{align*}
using Fubini, so that
$$
F_{t,T} 
 =  \underbrace{\int_{t}^{T}r_s \D s + \int_{0}^{t} \int_{u}^{t}\partial_{u} f_{u,s} \D s \D u - \int_{0}^{T} \int_{u}^{T}\partial_{u} f_{u,s} \D s \D u}_{\int_{t}^{T}f_{0,s}\D s}
 + \int_{0}^{t}\int_{t}^{T}\beta_{u,s}\D s\D \Wf_u + \int_{0}^{t}\int_{t}^{T}\alpha_{u,s}\D s\D u,
$$
and
$\D F_{t,T} 
 =  \left(\int_{t}^{T}\alpha_{t,s}\D s - r_t\right)\D t + \left(\int_{t}^{T}\beta_{t,s}\D s\right)\D \Wf_t$.
Therefore,
$$
\D Z_t = \D\left(-\int_{0}^{t}r_s \D s - \int_{t}^{T}f_{t,s}\D s\right)
 = -r_t \D t - \D F_{t,T}
 = -r_t \D t - \D F_{t,T}
=  -\left(\int_{t}^{T}\alpha_{t,s}\D s \right)\D t - \left(\int_{t}^{T}\beta_{t,s}\D s\right)\D \Wf_t,
$$
and~\eqref{eq:SDEPTilde} gives
$$
\frac{\D \Ptilde_{t,T}}{\Ptilde_{t,T}} = -\left(\int_{t}^{T}\alpha_{t,s}\D s - \half \left(\int_{t}^{T}\beta_{t,s}\D s\right)^2\right)\D t - \left(\int_{t}^{T}\beta_{t,s}\D s\right)\D \Wf_t.
$$
The discounted process $(\Ptilde_{t,T})_{t \in [0,T]}$ is  a local martingale if and only if its drift is null:
for $t\in (0,T)$,
$$
\partial_{T}\left\{\int_{t}^{T}\alpha_{t,s}\D s - \half \left[\int_{t}^{T}\beta_{t,s}\D s\right]^2\right\}
  = \alpha_{t,T} - \beta_{t,T}\int_{t}^{T}\beta_{t,s}\D s
 = \varphi(T-t)\left[\XiT(t) - \int_{t}^{T}\varphi(s,t)\D s\right],
$$
which is equal to zero by definition of the functions.
Therefore the drift (as a function of~$T$) is constant. 
Since it is trivially equal to zero at $T=t$, it is null everywhere 
and $(\Ptilde_{t,T})_{t \in [0,T]}$ is 
a $\QQ$-local martingale.
\end{remark}

\subsection{Convexity adjustments}
\label{sex:Convexity}
We now enter the core of the paper, investigating the influence of the Gaussian driver on the convexity of bond prices.
We first start with the following simple proposition:
\begin{proposition}\label{cor:SDEProd}
For any $T, \tau\geq 0$,
\begin{align*}
\D \left(\frac{1}{P_{t,\tau}}\right)
 &  = \frac{\left(\Xi_{\tau}(t,t)^2\gamma'_{\Wf}(t) - r_t\right) \D t}{P_{t,\tau}} - \frac{\Xi_{\tau}(t,t)}{P_{t,\tau}}\D \Wf_t,\\
\D\left(\frac{P_{t,T}}{P_{t,\tau}}\right)
 & = \frac{P_{t,T}}{P_{t,\tau}}\Big(\Xi_{T}(t,t)- \Xi_{\tau}(t,t)\Big)\Big\{-\Xi_{\tau}(t,t)\gamma'_{\Wf}(t)\D t  + \D \Wf_t \Big\},
\end{align*}
and there exists a probability measure~$\QQ^{\tau}$ such that
$\Wf^{\QQ^{\tau}}_t$ is a $\QQ^{\tau}$-Gaussian martingale and
\begin{equation}\label{eq:DynamicsRatioQ}
\D\left(\frac{P_{t,T}}{P_{t,\tau}}\right) = 
\frac{P_{t,T}}{P_{t,\tau}}\Sigma_{t}^{T,\tau} \D \Wf^{\QQ^{\tau}}_t,
\end{equation}
under $\QQ^{\tau}$, 
where $\Sigma_{t}^{T,\tau}
:= \Xi_{T}(t) - \Xi_{\tau}(t)
= \Phi(T-t) - \Phi(\tau-t)$.
\end{proposition}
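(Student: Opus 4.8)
The plan is to build everything on the bond dynamics from Proposition~\ref{prop:ZeroCoupon}. Since $\Ptilde_{t,T}$ satisfies $\D\Ptilde_{t,T}/\Ptilde_{t,T} = \XiT(t)\D\Wf_t$ and $\Ptilde_{t,T} = P_{t,T}\exp\{-\int_0^t r_s\D s\}$, undoing the discounting factor gives $\D P_{t,T}/P_{t,T} = r_t\D t + \XiT(t)\D\Wf_t$, and the same with $T$ replaced by $\tau$. Moreover, as $\Wf$ is a continuous Gaussian martingale its predictable quadratic variation equals the deterministic function $\gamma_{\Wf}$, so that $\D\langle\Wf\rangle_t = \gamma'_{\Wf}(t)\D t$. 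These two observations supply all the inputs needed below.

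For the first identity I would apply It\^o's formula to $x\mapsto 1/x$ at $P_{t,\tau}$: the first-order term yields $-P_{t,\tau}^{-1}\big(r_t\D t + \XiTau(t)\D\Wf_t\big)$ while the second-order term yields $P_{t,\tau}^{-3}\D\langle P_{\cdot,\tau}\rangle_t = P_{t,\tau}^{-1}\XiTau(t)^2\gamma'_{\Wf}(t)\D t$, and collecting the two reproduces the stated expression. For the second identity I would use the It\^o product rule on $P_{t,T}\cdot P_{t,\tau}^{-1}$. The two $r_t\,\D t$ drifts cancel, the cross-variation term contributes $-\tfrac{P_{t,T}}{P_{t,\tau}}\XiT(t)\XiTau(t)\gamma'_{\Wf}(t)\D t$, and after factoring out the common $\XiT(t)-\XiTau(t)$ the drift and diffusion coalesce into the claimed $\tfrac{P_{t,T}}{P_{t,\tau}}\big(\XiT(t)-\XiTau(t)\big)\big\{-\XiTau(t)\gamma'_{\Wf}(t)\D t + \D\Wf_t\big\}$. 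Both are routine, if bookkeeping-heavy, calculations.

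For the measure change the natural choice is the $\tau$-forward measure. I would define $\QQ^{\tau}$ on $\Ff_T$ through the density
$$
\left.\frac{\D\QQ^{\tau}}{\D\QQ}\right|_{\Ff_t} = \frac{\Ptilde_{t,\tau}}{P_{0,\tau}},
$$
which is a strictly positive $\QQ$-martingale precisely because $\Ptilde_{\cdot,\tau}$ is one by Proposition~\ref{prop:ZeroCoupon}, and whose multiplicative dynamics are $\XiTau(t)\D\Wf_t$. Girsanov's theorem for continuous martingales then gives that
$$
\Wf^{\QQ^{\tau}}_t := \Wf_t - \int_0^t \XiTau(s)\gamma'_{\Wf}(s)\D s
$$
is a $\QQ^{\tau}$-local martingale, i.e. $\D\Wf^{\QQ^{\tau}}_t = \D\Wf_t - \XiTau(t)\gamma'_{\Wf}(t)\D t$. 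Substituting this into the second identity annihilates the drift and produces~\eqref{eq:DynamicsRatioQ}, with $\Sigma_t^{T,\tau} = \XiT(t)-\XiTau(t) = \Phi(T-t)-\Phi(\tau-t)$ by the representation of $\Xi$ derived in Lemma~\ref{lem:GaussianSemi}.

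The main obstacle is the final assertion that $\Wf^{\QQ^{\tau}}$ is a genuine Gaussian martingale rather than merely a local martingale. The Girsanov shift $\int_0^t\XiTau(s)\gamma'_{\Wf}(s)\D s$ is deterministic, which is where the $L^2(\gamma_{\Wf})$ integrability of Assumption~\ref{assu:Kernel} enters to guarantee the shift is well-defined, so $\Wf^{\QQ^{\tau}}$ is a continuous local martingale under $\QQ^{\tau}$ carrying the same deterministic bracket $\gamma_{\Wf}$. A continuous local martingale with deterministic quadratic variation is automatically a centred Gaussian process with independent increments, since its exponential martingale pins the characteristic function down to a Gaussian one; this upgrades it to the required Gaussian martingale and closes the argument.
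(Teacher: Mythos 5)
Your proof is correct, and the first two identities are established exactly as in the paper: It\^o's formula for $x\mapsto 1/x$ applied to $P_{t,\tau}$, then the product rule on $P_{t,T}\cdot P_{t,\tau}^{-1}$, with the same bookkeeping of drift cancellation and the cross-variation term. Where you genuinely diverge is the change of measure. The paper defines an abstract Dol\'eans-Dade exponential
$M_t = \exp\left\{\int_{0}^{t}\XiTau(s)\gamma'_{\Wf}(s)\D \Wf_s - \half\int_{0}^{t}\left[\XiTau(s)\gamma'_{\Wf}(s)\right]^2\D s\right\}$
and invokes Girsanov's theorem (citing the Brownian-motion version in {\O}ksendal), whereas you take the density of $\QQ^{\tau}$ to be the normalised discounted bond $\Ptilde_{t,\tau}/P_{0,\tau}$, i.e.\ the num\'eraire construction of the forward measure --- consistent with the paper's own remark that $\QQ^{\tau}$ is the $\tau$-forward measure. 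The two constructions target the same measure, since $\Ptilde_{t,\tau}/P_{0,\tau}$ is precisely the stochastic exponential of $\int_0^{\cdot}\XiTau(s)\D\Wf_s$, but your route has three concrete advantages. First, the density is a true (not merely local) $\QQ$-martingale for free, by Proposition~\ref{prop:ZeroCoupon}, so no Novikov-type verification is needed. Second, it handles the non-Brownian bracket correctly: for a driver with $\D\langle\Wf\rangle_t = \gamma'_{\Wf}(t)\D t$, the exponential whose Girsanov shift equals $\int_0^t\XiTau(s)\gamma'_{\Wf}(s)\D s$ must have kernel $\XiTau(s)$ against $\D\Wf_s$ and compensator $\half\int_0^t\XiTau(s)^2\gamma'_{\Wf}(s)\D s$ --- which is what your density delivers --- whereas the paper's displayed $M_t$, with the extra $\gamma'_{\Wf}$ factors, is only consistent in the Brownian case $\gamma'_{\Wf}\equiv 1$. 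Third, your closing argument --- quadratic variation is unchanged under an equivalent measure change, and a continuous local martingale with deterministic bracket is automatically a genuine centred Gaussian martingale (via the exponential-martingale characterisation of its increments, or Dambis--Dubins--Schwarz) --- fills in a step that the paper asserts without justification when it claims $\Wf^{\QQ^{\tau}}$ is a $\QQ^{\tau}$-Gaussian martingale.
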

Note that, from the definition of $\Xi_{T}$
in~\eqref{eq:Phi}, 
$\Sigma_{t}^{T,\tau}$ is non-negative whenever $\tau\geq T$.
\blue{In standard Fixed Income literature, the probability measure~$\QQ^{\tau}$ corresponds to the $\tau$-forward measure.}

\begin{proof}
From the definition of the zero-coupon price~\eqref{eq:DefZeroCoupon} and Proposition~\ref{prop:ZeroCoupon}, 
$P_{t,T}$ is strictly positive almost surely and 
$$
\frac{\D P_{t,T}}{P_{t,T}} = r_t \D t + \Xi_{T}(t,t)\D \Wf_t, 
$$
and therefore It\^o's formula implies that, for any $0\leq t \leq \tau$,
$$
\D\left(\frac{1}{P_{t,\tau}}\right)
 = -\frac{\D P_{t,\tau}}{P_{t,\tau}^2} + \frac{\D\langle P_{t,\tau},P_{t,\tau}\rangle}{P_{t,\tau}^3}
 = \frac{\Big(\Xi_{\tau}(t,t)^2\gamma'_{\Wf}(t) - r_t\Big) \D t}{P_{t,\tau}} - \frac{\Xi_{\tau}(t,t)\D \Wf_t}{P_{t,\tau}}.
$$
Therefore
\begin{align*}
\D\left(\frac{P_{t,T}}{P_{t,\tau}}\right)
 & = P_{t,T}\D\left(\frac{1}{P_{t,\tau}}\right) + \frac{\D P_{t,T}}{P_{t,\tau}} + \D P_{t,T}\cdot\D\left(\frac{1}{P_{t,\tau}}\right)\\
 & = \frac{P_{t,T}}{P_{t,\tau}}\left\{\Big(\Xi_{\tau}(t,t)^2\gamma'_{\Wf}(t) - r_t\Big) \D t - \Xi_{\tau}(t,t)\D \Wf_t
 + \Big( r_t \D t + \Xi_{T}(t,t)\D \Wf_t\Big) - \Xi_{T}(t,t)\Xi_{\tau}(t,t)\gamma_{\Wf}'(t)\D t\right\}\\
 & = \frac{P_{t,T}}{P_{t,\tau}}\Big(\Xi_{T}(t,t)- \Xi_{\tau}(t,t)\Big)\Big\{-\Xi_{\tau}(t,t)\gamma'_{\Wf}(t)\D t  + \D \Wf_t \Big\}.
\end{align*}
Define now the Dol\'eans-Dade exponential
$$
M_t := \exp\left\{\int_{0}^{t}\Xi_{\tau}(s,s)\gamma'_{\Wf}(s)\D \Wf_s - \frac{1}{2}\int_{0}^{t}\left[\Xi_{\tau}(s,s)\gamma'_{\Wf}(s)\right]^2\D s\right\},
$$
and the Radon-Nikodym derivative
$\frac{\D\QQ^{\tau}}{\D\PP} := M$.
Girsanov's Theorem~\cite[Theorem 8.6.4]{oksendal2003stochastic} implies that 
$\Wf^{\QQ^{\tau}}_t := \Wf_t - \int_{0}^{t}\Xi_{\tau}(s,s)\gamma'_{\Wf}(s)\D s$
is a Gaussian martingale and 
$\frac{P_{t,T}}{P_{t,\tau}}$ satisfies~\eqref{eq:DynamicsRatioQ} under~$\QQ^{\tau}$.
\end{proof}

The following proposition is key and provides a closed-form expression for the convexity adjustments:
\begin{proposition}\label{prop:Convexity}
For any $\tau\geq 0$ let $\tf_1, \tf_2\geq 0$. 
We then have
$$
\EE^{\QQ^{\tau}}\left[\frac{P_{t,\tf_{1}}}{P_{t,\tf_{2}}}\right]
 = \frac{P_{0,\tf_{1}}}{P_{0,\tf_{2}}}\Cf_t^{\tau}(\tf_1, \tf_2),
 \qquad\text{for any }t \in [0, \tf_1\wedge \tf_2],
$$
where 
$\displaystyle 
\Cf_t^{\tau}(\tf_1, \tf_2) := \exp\left\{
 \int_{0}^{t}\left(\Sigma_s^{\tf_2,\tau} - \Sigma_s^{\tf_1,\tau}\right)
\Sigma_s^{\tf_2,\tau}\gamma'_{\Wf}(s)\D s\right\}$
is the convexity adjustment factor.
\end{proposition}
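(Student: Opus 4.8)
The plan is to exploit the fact that, under the $\tau$-forward measure $\QQ^{\tau}$ built in Proposition~\ref{cor:SDEProd}, both forward bond prices $X^i_t := P_{t,\tf_i}/P_{t,\tau}$, $i\in\{1,2\}$, are driftless and driven by the \emph{same} Gaussian martingale $\Wf^{\QQ^{\tau}}$. Consequently their quotient $P_{t,\tf_1}/P_{t,\tf_2}=X^1_t/X^2_t$ collapses to a single, explicitly computable log-Gaussian variable, and the expectation reduces to a one-dimensional Gaussian Laplace transform. The whole statement then follows from an elementary algebraic simplification; there is no genuinely hard step, only some care about integrability.

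Concretely, I would start from~\eqref{eq:DynamicsRatioQ}, which gives $\D X^i_t = X^i_t\,\Sigma^{\tf_i,\tau}_t\,\D\Wf^{\QQ^{\tau}}_t$ with $\Sigma^{\tf_i,\tau}$ deterministic. Since the quadratic variation of $\Wf$ is unchanged under the equivalent change of measure, $\D\langle\Wf^{\QQ^{\tau}}\rangle_t=\gamma'_{\Wf}(t)\D t$, and It\^o's formula applied to $\log X^i$ yields the closed form
$$
X^i_t = X^i_0\exp\left\{\int_{0}^{t}\Sigma^{\tf_i,\tau}_s\,\D\Wf^{\QQ^{\tau}}_s - \half\int_{0}^{t}\big(\Sigma^{\tf_i,\tau}_s\big)^2\gamma'_{\Wf}(s)\,\D s\right\}.
$$
Taking the quotient, the common factors $P_{0,\tau}$ cancel, so $X^1_0/X^2_0 = P_{0,\tf_1}/P_{0,\tf_2}$, and
$$
\frac{P_{t,\tf_1}}{P_{t,\tf_2}} = \frac{P_{0,\tf_1}}{P_{0,\tf_2}}\exp\left\{\int_{0}^{t}\big(\Sigma^{\tf_1,\tau}_s-\Sigma^{\tf_2,\tau}_s\big)\D\Wf^{\QQ^{\tau}}_s - \half\int_{0}^{t}\Big(\big(\Sigma^{\tf_1,\tau}_s\big)^2-\big(\Sigma^{\tf_2,\tau}_s\big)^2\Big)\gamma'_{\Wf}(s)\D s\right\}.
$$

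Next, because $s\mapsto\Sigma^{\tf_1,\tau}_s-\Sigma^{\tf_2,\tau}_s$ is deterministic, the stochastic integral in the exponent is, under $\QQ^{\tau}$, a centered Gaussian with variance $\int_{0}^{t}\big(\Sigma^{\tf_1,\tau}_s-\Sigma^{\tf_2,\tau}_s\big)^2\gamma'_{\Wf}(s)\D s$; the identity $\EE^{\QQ^{\tau}}[\E^{N}]=\E^{\half\VV[N]}$ for a centered Gaussian $N$ then performs the expectation. Collecting the two deterministic exponents, the integrand of the resulting exponent is $\half\big(\Sigma^{\tf_1,\tau}_s-\Sigma^{\tf_2,\tau}_s\big)^2-\half\big((\Sigma^{\tf_1,\tau}_s)^2-(\Sigma^{\tf_2,\tau}_s)^2\big)$, which factors as $\big(\Sigma^{\tf_2,\tau}_s-\Sigma^{\tf_1,\tau}_s\big)\Sigma^{\tf_2,\tau}_s$, reproducing exactly $\Cf^{\tau}_t(\tf_1,\tf_2)$. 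As a sanity check, setting $\tau=\tf_2$ makes $\Sigma^{\tf_2,\tau}\equiv 0$ and hence $\Cf^{\tau}_t\equiv 1$, consistent with $P_{\cdot,\tf_1}/P_{\cdot,\tf_2}$ being a genuine $\QQ^{\tf_2}$-martingale.

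The step requiring the most care, rather than the algebra, is integrability: I must check that $X^1_t/X^2_t\in L^1(\QQ^{\tau})$, so that the Laplace-transform identity applies and the explicit solution above is legitimate. Since $\Sigma^{\tf_i,\tau}_s=\Phi(\tf_i-s)-\Phi(\tau-s)$ is bounded on the compact interval $[0,t]$ and hence square-integrable against $\gamma_{\Wf}$ there (under Assumption~\ref{assu:Kernel}), the relevant Gaussian variances are finite and Novikov's condition holds, which secures both the true-martingale property of the $X^i$ and the finiteness of every exponential moment used above. With this in hand the computation is otherwise routine.
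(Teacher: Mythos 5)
Your proposal is correct and follows essentially the same route as the paper: solve the lognormal SDE for $P_{t,\tf_i}/P_{t,\tau}$ under $\QQ^{\tau}$, take the quotient so the $P_{0,\tau}$ factors cancel, and evaluate the expectation using the deterministic integrand, with the same algebraic identity $\half(a-b)^2-\half(a^2-b^2)=b(b-a)$ producing the adjustment factor. The only cosmetic difference is that where you invoke the Gaussian Laplace-transform identity $\EE[\E^{N}]=\E^{\half\VV[N]}$ (and explicitly check Novikov), the paper factors the same expression into a Dol\'eans-Dade exponential martingale times the deterministic factor $\Cf_t^{\tau}(\tf_1,\tf_2)$ — two phrasings of the identical computation, with your integrability remark being slightly more careful than the paper's.
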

\begin{remark}\ 
\begin{itemize}
\item When $t=0$ or $\tf_1=\tf_2$ or $\displaystyle \frac{P_{t,\tf_{1}}}{P_{t,\tf_{2}}}$ is constant, 
there is no convexity adjustment,
i.e. $\Cf_t^{\tau}(\tf_1, \tf_2)=1$.
\item More interestingly, if $\tf_2=\tau$, then 
$\Sigma_{t}^{\tf_2,\tau} = \Sigma_{t}^{\tf_2,\tf_2} = \Xi_{\tf_2}(t,t)- \Xi_{\tf_2}(t,t) = 0$ and
$$
\Cf_t^{\tau}(\tf_1, \tf_2) = 
\Cf_t^{\tf_2}(\tf_1, \tf_2)
= \exp\left\{
 \int_{0}^{t}\left(\Sigma_s^{\tf_2,\tf_2} - \Sigma_s^{\tf_1,\tf_2}\right)
\Sigma_s^{\tf_2,\tf_2}\gamma'_{\Wf}(s)\D s\right\} = 1,
$$
and the process
$\displaystyle\left(\frac{P_{t,\tf_{1}}}{P_{t,\tf_{2}}}\right)_{t\geq 0}$ is a $\QQ^{\tau}$-martingale on $[0, \tf_1\wedge \tf_2]$.
\item Regarding the sign of the convexity adjustment, we have
\begin{align*}
\Sigma_s^{\tf_2,\tau} - \Sigma_s^{\tf_1,\tau}
 & = \Big(\Xi_{\tf_2}(s,s) - \Xi_{\tau}(s,s)\Big) - \Big(\Xi_{\tf_1}(s,s)- \Xi_{\tau}(s,s)\Big)\\
  & = \Xi_{\tf_2}(s,s) - \Xi_{\tf_1}(s,s)\\
  & = -\int_{s}^{\tf_2}\varphi(z,s)\D z + 
  \int_{s}^{\tf_2}\varphi(z,s)\D z
   = -\int_{\tf_1}^{\tf_2}\varphi(z,s)\D z.
\end{align*}
Since $\varphi(\cdot)$ is strictly positive, then 
$\sgn(\Sigma_s^{\tf_2,\tau} - \Sigma_s^{\tf_1,\tau}) = \sgn(\tf_1 - \tf_2)$.
Furthermore, since 
$$
\Sigma_s^{\tf_2,\tau}
 = \Xi_{\tf_2}(s,s)- \Xi_{\tau}(s,s)
= -\int_{s}^{\tf_2}\varphi(z,s)\D z
+ \int_{s}^{\tau}\varphi(z,s)\D z
 = \int_{\tf_2}^{\tau}\varphi(z,s)\D z,
$$
then $\sgn(\Sigma_s^{\tf_2,\tau}) = \sgn(\tau-\tf_2)$, 
and therefore, assuming $\gamma'_{\Wf}$ strictly positive (as will be the case in all the examples considered here),
\begin{center}
\begin{tabular}{ |c|c|c| } 
 \hline
 $\sgn(\log\Cf_t^{\tau}(\tf_1, \tf_2))$ &  $\tf_1 > \tf_2$ & $\tf_1 < \tf_2$ \\ 
 \hline
 $\tau<\tf_2$ & negative & positive \\ 
 $\tau>\tf_2$ & positive & negative \\ 
 \hline
\end{tabular}
\end{center}
Considering without generality $\tf_1<\tf_2$, the convexity adjustment
is therefore greater than~$1$ for $\tau<\tf_2$
and less than~$1$ above.
\end{itemize}
\end{remark}

\begin{proof}[Proof of Proposition~\ref{prop:Convexity}]
Under~$\QQ^{\tau}$, the process defined as $X_t := P_{t,T} / P_{t,\tau}$
satisfies $\D X_t = X_t \Sigma_t^{T,\tau} \D\Wf_t^{\QQ^{\tau}}$, is clearly lognormal 
 and hence It\^o's formula implies
$$
\D\log(X_t)   = \frac{\D X_t}{X_t} - \frac{1}{2}\frac{\D\langle X,X\rangle_t}{X_t^2}
  = \Sigma_t^{T,\tau}\D \Wf_t^{\QQ^{\tau}} - \frac{1}{2}\left(\Sigma_t^{T,\tau}\right)^2 \gamma'_{\Wf}(t) \D t,
$$
so that
$$
X_t = X_0 \exp\left\{\int_{0}^{t}\Sigma_s^{T,\tau}\D \Wf_s - \frac{1}{2}\int_{0}^{t}\left(\Sigma_s^{T,\tau}\right)^2 \gamma'_{\Wf}(s) \D s\right\},
$$
and therefore
$$
\frac{P_{t,T}}{P_{t,\tau}} 
= \frac{P_{0,T}}{P_{0,\tau}} \exp\left\{\int_{0}^{t}\Sigma_s^{T,\tau}\D \Wf_s - \frac{1}{2}\int_{0}^{t}\left(\Sigma_s^{T,\tau}\right)^2 \gamma'_{\Wf}(s) \D s\right\}.
$$
With successively $T=\tf_1$ and $T=\tf_2$, we can then write
\begin{align*}
\frac{P_{t,\tf_1}}{P_{t,\tau}} 
 & = \frac{P_{0,\tf_1}}{P_{0,\tau}} \exp\left\{\int_{0}^{t}\Sigma_s^{\tf_1,\tau}\D \Wf_s - \frac{1}{2}\int_{0}^{t}\left(\Sigma_s^{\tf_1,\tau}\right)^2 \gamma'_{\Wf}(s) \D s\right\},\\
\frac{P_{t,\tf_2}}{P_{t,\tau}} 
 & = \frac{P_{0,\tf_2}}{P_{0,\tau}} \exp\left\{\int_{0}^{t}\Sigma_s^{\tf_2,\tau}\D \Wf_s - \frac{1}{2}\int_{0}^{t}\left(\Sigma_s^{\tf_2,\tau}\right)^2 \gamma'_{\Wf}(s) \D s\right\},
 \end{align*}
so that
\begin{align*}
\frac{P_{t,\tf_1}}{P_{t,\tf_2}} 
 & = \frac{P_{0,\tf_1}}{P_{0,\tf_2}} \exp\left\{\int_{0}^{t}\Sigma_s^{\tf_1,\tau}\D \Wf_s - \frac{1}{2}\int_{0}^{t}\left(\Sigma_s^{\tf_1,\tau}\right)^2 \gamma'_{\Wf}(s) \D s
 - \int_{0}^{t}\Sigma_s^{\tf_2,\tau}\D \Wf_s + \frac{1}{2}\int_{0}^{t}\left(\Sigma_s^{\tf_2,\tau}\right)^2 \gamma'_{\Wf}(s) \D s
 \right\}\\
 & = \frac{P_{0,\tf_1}}{P_{0,\tf_2}} \exp\left\{
 \int_{0}^{t}\left(\Sigma_s^{\tf_1,\tau} - \Sigma_s^{\tf_2,\tau}\right)\D \Wf_s 
 + \frac{1}{2}\int_{0}^{t}\left[\left(\Sigma_s^{\tf_2,\tau}\right)^2 - \left(\Sigma_s^{\tf_1,\tau}\right)^2\right]\gamma'_{\Wf}(s) \D s \right\}\\
 & = \frac{P_{0,\tf_1}}{P_{0,\tf_2}} \exp\left\{
 \int_{0}^{t}\left(\Sigma_s^{\tf_1,\tau} - \Sigma_s^{\tf_2,\tau}\right)\D \Wf_s 
 - \half \int_{0}^{t}\left(\Sigma_s^{\tf_1,\tau} - \Sigma_s^{\tf_2,\tau}\right)^2\gamma'_{\Wf}(s)\D s\right\}\\
 & \qquad \exp\left\{
 \half \int_{0}^{t}\left[\left(\Sigma_s^{\tf_1,\tau}\right)^2
+ \left(\Sigma_s^{\tf_2,\tau}\right)^2
-  2\Sigma_s^{\tf_1,\tau}\Sigma_s^{\tf_2,\tau}\right]\gamma'_{\Wf}(s)\D s 
 + \half \int_{0}^{t}\left[\left(\Sigma_s^{\tf_2,\tau}\right)^2 - \left(\Sigma_s^{\tf_1,\tau}\right)^2\right]\gamma'_{\Wf}(s) \D s \right\}\\
 & = \frac{P_{0,\tf_1}}{P_{0,\tf_2}} \exp\left\{
 \int_{0}^{t}\left(\Sigma_s^{\tf_1,\tau} - \Sigma_s^{\tf_2,\tau}\right)\D \Wf_s 
 - \half \int_{0}^{t}\left(\Sigma_s^{\tf_1,\tau} - \Sigma_s^{\tf_2,\tau}\right)^2\gamma'_{\Wf}(s)\D s\right\}\\
 & \qquad \exp\left\{
 \int_{0}^{t}\left[\left(\Sigma_s^{\tf_2,\tau}\right)^2
-  \Sigma_s^{\tf_1,\tau}\Sigma_s^{\tf_2,\tau}\right]\gamma'_{\Wf}(s)\D s\right\}. 
\end{align*}
The first exponential is a Dol\'eans-Dade exponential martingale under~$\QQ^{\tau}$, thus has 
$\QQ^{\tau}$-expectation equal to one, and the proposition follows.
\end{proof}

\subsection{Examples}
Let $\Wf = W$ be a standard Brownian motion,
so that $\gamma_{\Wf}(t)=t$ and $\gamma'_{\Wf}(t)=1$.

\subsubsection{Exponential kernels}\label{sec:Ex_ExpKernel}
Assume that $\varphi(t) = \E^{-\alpha t}$ for some $\alpha>0$, then the short rate process is of Ornstein-Uhlenbeck type and
$$
\XiT(t,u) = \Phi(T-u)  - \Phi(t-u) 
\qquad\text{with}\qquad
\Phi(z) := \frac{1}{\alpha}\E^{-\alpha z}.
$$
We can further compute 
$\Xi_{\tau}(t,t) = \Phi(\tau,t)  - \Phi(t,t)$, and
$$
\Sigma_{t}^{T,\tau}
 = \Xi_{T}(t,t) - \Xi_{\tau}(t,t)
 = \Phi(T,t)  - \Phi(t,t) - \Phi(\tau,t)  + \Phi(t,t)
 = \Phi(T,t)  - \Phi(\tau,t).
$$
Therefore the diffusion coefficient~$\Sigma_{t}^{T,\tau}$ 
and the Girsanov drift~$\Xi_{\tau}(t,t)$ read
$$
\Xi_{\tau}(t,t) = \frac{1}{\alpha}\left(\E^{-\alpha(\tau-t)} - 1\right)
\qquad\text{and}\qquad
\Sigma_{t}^{T,\tau} = \frac{1}{\alpha}\left(\E^{-\alpha(T-t)} - \E^{-\alpha(\tau-t)}\right).
$$
Finally, regarding the convexity adjustment,
$$
\log \Cf_t^{\tau}(\tf_1, \tf_2) = 
\frac{\E^{2\alpha t} - 1}{2\alpha^3}\left\{
\left(\E^{-\alpha \tf_1} - \E^{-\alpha \tf_2}\right)\E^{-\alpha \tau}
 + \E^{-2\alpha \tf_2} - \E^{-\alpha (\tf_1+\tf_2)}
 \right\}.
$$
Note that, as~$\alpha$ tends to zero, 
namely $r_t = \theta(t) +  W_t$ (in the limit), we obtain
$$
\Cf_t^{\tau}(\tf_1, \tf_2) = 
\exp\Big\{(\tf_2-\tf_1)(\tf_2-\tau)t\Big\}.
$$

\subsubsection{Riemann-Liouville kernels}\label{sec:Ex_RLKernel}
Let $H \in (0, 1)$ and $H_{\pm} := H\pm\half$.
If $\varphi(t) = t^{\Hm}$, with , the short rate process~\eqref{eq:ShortRatedWH} is driven by a Riemann-Liouville fractional Brownian motion with Hurst exponent~$H$.
Furthermore, with $\Hp := H+\half$,
$$
\XiT(t,u) = \Phi(T-u)  - \Phi(t-u) 
\qquad\text{with}\qquad
\Phi(z) := -\frac{z^{\Hp}}{\Hp}.
$$
Therefore the diffusion coefficient~$\Sigma_{t}^{T,\tau}$ and Girsanov drift~$\Xi_{\tau}(t,t)$ read
$$
\Xi_{\tau}(t,t) = -\frac{(\tau-t)^{\Hp}}{\Hp}
\qquad\text{and}\qquad
\Sigma_{t}^{T,\tau} = \frac{(\tau-t)^{\Hp} - (T-t)^{\Hp}}{\Hp}.
$$
Regarding the convexity adjustment, we instead have
\begin{align*}
\Cf_t^{\tau}(\tf_1, \tf_2) 
 & = \exp\left\{
\int_{0}^{t}\left(\Sigma_s^{\tf_2,\tau} - \Sigma_s^{\tf_1,\tau}\right)
\Sigma_s^{\tf_2,\tau}\D s\right\}
\end{align*}
Unfortunately, there does not seem to be a closed-form simplification here. We can however provide the following approximations:
\begin{lemma}
The following asymptotic expansions are straightforward and provide some closed-form expressions that may help the reader grasp a flavour on the roles of the parameters:
\begin{itemize}
\item As $t$ tends to zero,
$$
\log \Cf_t^{\tau}(\tf_1, \tf_2) 
= \frac{t}{\Hp^2}
\left(\tf_2^{\Hp} - \tf_1^{\Hp}\right)
\left(\tf_2^{\Hp} - \tau^{\Hp}\right)
 + \Oo\left(t^2\right).
$$
\item For any $\eta>0$, as $\eps$ tend to zero,
$$
\log \Cf_t^{\tf_1-\eps}(\tf_1, \tf_1+\eps) 
 = \frac{1+\eta}{2H}
\Big(\tf_1^{2H} - (\tf_1-t)^{2H}\Big)\eps^2
 + \Oo\left(\eps^3\right).
 $$
\end{itemize}
\end{lemma}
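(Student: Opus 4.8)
Both expansions start from the same reduction. Since $\Wf = W$ is standard Brownian motion, $\gamma'_{\Wf}\equiv 1$, so Proposition~\ref{prop:Convexity} gives
$$
\log\Cf_t^{\tau}(\tf_1,\tf_2)
 = \int_0^t g_s\,\D s,
\qquad
g_s := \left(\Sigma_s^{\tf_2,\tau} - \Sigma_s^{\tf_1,\tau}\right)\Sigma_s^{\tf_2,\tau},
$$
with $\Sigma_s^{T,\tau} = \bigl[(\tau-s)^{\Hp} - (T-s)^{\Hp}\bigr]/\Hp$ in the Riemann--Liouville case. The plan in both cases is to Taylor-expand $g_s$ in the relevant small parameter and integrate term by term over $[0,t]$; the only genuine issue is to check that the pointwise remainder stays of the claimed order after integration.

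For the small-$t$ expansion I would fix $\tf_1,\tf_2,\tau>0$ and regard the integral as a function of its upper limit. Because $\tf_1\wedge\tf_2\wedge\tau>0$, the map $s\mapsto g_s$ is smooth near $s=0$, so the fundamental theorem of calculus yields $\int_0^t g_s\,\D s = g_0\,t + \Oo(t^2)$, and it only remains to evaluate $g_0$. From $\Sigma_0^{T,\tau} = (\tau^{\Hp}-T^{\Hp})/\Hp$ one gets $\Sigma_0^{\tf_2,\tau}-\Sigma_0^{\tf_1,\tau} = (\tf_1^{\Hp}-\tf_2^{\Hp})/\Hp$ and $\Sigma_0^{\tf_2,\tau} = (\tau^{\Hp}-\tf_2^{\Hp})/\Hp$, whose product is exactly $\Hp^{-2}(\tf_2^{\Hp}-\tf_1^{\Hp})(\tf_2^{\Hp}-\tau^{\Hp})$, which is the stated leading coefficient.

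For the small-$\eps$ expansion I hold $t,\tf_1,H$ fixed, set $\tf_2 = \tf_1+\eps$ and (as the dependence on $\eta$ forces) $\tau = \tf_1 - \eta\eps$, and write $a_s := \tf_1 - s$. For $s\in[0,t]$ with $t<\tf_1$ we have $a_s \geq \tf_1 - t > 0$, so the expansion $(a_s\pm c\eps)^{\Hp} = a_s^{\Hp} \pm \Hp\,c\,a_s^{\Hp-1}\eps + \Oo(\eps^2)$ holds with remainder uniform in $s$, the error being controlled by $\sup_{s\in[0,t]} a_s^{\Hp-2}<\infty$. Substituting gives $\Sigma_s^{\tf_2,\tau} = -(1+\eta)a_s^{\Hp-1}\eps + \Oo(\eps^2)$ and $\Sigma_s^{\tf_2,\tau}-\Sigma_s^{\tf_1,\tau} = -a_s^{\Hp-1}\eps + \Oo(\eps^2)$, so that $g_s = (1+\eta)(\tf_1-s)^{2H-1}\eps^2 + \Oo(\eps^3)$ (using $2(\Hp-1)=2H-1$). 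Integrating via $\int_0^t (\tf_1-s)^{2H-1}\,\D s = [\tf_1^{2H}-(\tf_1-t)^{2H}]/(2H)$ then produces the claimed expansion.

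The main obstacle is precisely the uniform control of these remainders that legitimises exchanging expansion and integration. In the first case this is immediate from smoothness of $g_s$ away from the origin. In the second it rests on $a_s=\tf_1-s$ staying bounded away from zero on $[0,t]$, i.e. on the regime $t<\tf_1$; at the boundary $t=\tf_1$ the remainder factor $(\tf_1-s)^{\Hp-2}$ is no longer bounded near $s=\tf_1$, even though the leading integral $\int_0^{\tf_1}(\tf_1-s)^{2H-1}\D s = \tf_1^{2H}/(2H)$ remains finite for all $H\in(0,1)$. Covering that endpoint cleanly would require a dominated-convergence argument in place of a crude term-by-term bound, so I would either restrict to $t<\tf_1$ or invoke integrability of the singular remainder to extend the result.
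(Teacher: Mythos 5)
Your proof of the first expansion is correct and essentially identical to the paper's: both arguments Taylor-expand the integrand $\left(\Sigma_s^{\tf_2,\tau}-\Sigma_s^{\tf_1,\tau}\right)\Sigma_s^{\tf_2,\tau}$ at $s=0$ and integrate term by term over the compact interval $[0,t]$; the only cosmetic difference is that you package this as evaluating $g_0$ and invoking smoothness of $g$ near the origin, while the paper carries the $\Oo(s)$ remainders explicitly inside the integral. Both yield the same leading coefficient, since $(\tf_1^{\Hp}-\tf_2^{\Hp})(\tau^{\Hp}-\tf_2^{\Hp}) = (\tf_2^{\Hp}-\tf_1^{\Hp})(\tf_2^{\Hp}-\tau^{\Hp})$.

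Where you genuinely add something is the second expansion: the paper's proof stops after the first item and never establishes the small-$\eps$ statement, so your argument fills a gap in the paper's own treatment. You also correctly diagnosed an inconsistency in the statement itself: taken literally, $\Cf_t^{\tf_1-\eps}(\tf_1,\tf_1+\eps)$ means $\tau=\tf_1-\eps$, which would produce the factor $2$ (i.e. $\eta=1$) rather than $1+\eta$, so the $\eta$-dependent right-hand side forces the reading $\tau=\tf_1-\eta\eps$. With that reading, your uniform expansion of $(a_s\pm c\eps)^{\Hp}$ on $[0,t]$ with $t<\tf_1$, the identity $2(\Hp-1)=2H-1$, and the elementary integral $\int_0^t(\tf_1-s)^{2H-1}\D s = [\tf_1^{2H}-(\tf_1-t)^{2H}]/(2H)$ give exactly the claimed coefficient. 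One refinement of your closing remark: the restriction $t<\tf_1$ is not merely what makes the crude term-by-term remainder bound work; it is needed for the quantities themselves to be defined, since $\Sigma_s^{\tf_2,\tau}$ involves $(\tau-s)^{\Hp}=(\tf_1-\eta\eps-s)^{\Hp}$, which requires $s\leq \tf_1-\eta\eps$. Hence the endpoint $t=\tf_1$ lies outside the scope of the statement for any $\eps>0$, and no dominated-convergence extension is needed.
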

\begin{proof}
From the explicit computation of 
$\Sigma_{t}^{T,\tau}$ above, we can write, as~$s$ tends to zero,
$$
\Sigma_s^{T,\tau}
 = \frac{(\tau-s)^{\Hp} - (T-s)^{\Hp}}{\Hp}
  = \frac{\tau^{\Hp} - T^{\Hp}}{\Hp} + \Oo(s).
$$
As a function of~$s$, $\Sigma_s^{\tf_2,\tau}$ is continuously differentiable.
Because we are integrating over the compact $[0,t]$, we can integrate term by term, so that
\begin{align*}
\log\Cf_t^{\tau}(\tf_1, \tf_2) 
 & = 
\int_{0}^{t}\left(\Sigma_s^{\tf_2,\tau} - \Sigma_s^{\tf_1,\tau}\right)
\Sigma_s^{\tf_2,\tau}\D s\\
 & = 
\int_{0}^{t}
\left\{\left(\frac{\tau^{\Hp} - \tf_{2}^{\Hp}}{\Hp} - \frac{\tau^{\Hp} - \tf_{1}^{\Hp}}{\Hp}+ \Oo(s)\right)\left(\frac{\tau^{\Hp} - \tf_{2}^{\Hp}}{\Hp}
 + \Oo(s)\right)\right\}\D s\\
 & = 
\int_{0}^{t}
\left\{\left(\frac{\tf_{1}^{\Hp} - \tf_{2}^{\Hp}}{\Hp} + \Oo(s)\right)\left(\frac{\tau^{\Hp} - \tf_{2}^{\Hp}}{\Hp}
 + \Oo(s)\right)\right\}\D s\\
 & = 
\frac{\tf_{1}^{\Hp} - \tf_{2}^{\Hp}}{\Hp}
\frac{\tau^{\Hp} - \tf_{2}^{\Hp}}{\Hp}t
 + \Oo(t^2),
\end{align*}
where we can check by direct computations that the term $\Oo(t^2)$ is indeed non null.      
\end{proof}

\subsection{Extension to smooth Gaussian Volterra semimartingale drivers}

Let now~$\Wf$ in~\eqref{eq:ShortRatedWH} be a Gaussian Volterra process with a smooth kernel of the form
$$
\Wf_t = \int_{0}^{t}K(t,u)\D W_u,
$$
for some standard Brownian motion~$W$.
Assuming that~$K$ is a convolution kernel absolutely continuous with square integrable derivative,
it follows by~\cite{basse2010path} that~$\Wf$ is a Gaussian semimartingale (yet not necessarily a martingale) with the decomposition
$$
\Wf_t = \int_{0}^{t}K(u,u)\D W_u + \int_{0}^{t}\left(\int_{0}^{u}\partial_1 K(u,s)\D W_s\right)\D u
 =: \int_{0}^{t}K(u,u)\D W_u + A(t),
$$
where~$A$ is a process of bounded variation satisfying
$\D A(t) = A'(t)\D t = \left(\int_{0}^{t}\partial_1 K(t,s)\D W_s\right)\D t$
and hence the It\^o differential of~$\Wf_t$ reads
$\D \Wf_t = K(t,t)\D W_t + A'(t)\D t$,
and its quadratic variation is
$\D \langle \Wf, \Wf\rangle_t = \int_{0}^{t}K(u,u)^2 \D u$.
The short rate process~\eqref{eq:ShortRatedWH} therefore reads
$$
r_t = \theta(t) + \int_{0}^{t}\varphi(t-u)\D \Wf_u
 = \theta(t) + \int_{0}^{t}\varphi(t-u)\left(K(u,u)\D W_u + A'(u)\D u\right)
 = \widetilde{\theta}_t + \int_{0}^{t}\widetilde{\varphi}(t,u)K(u,u)\D W_u,
$$
where $\displaystyle \widetilde{\theta}_t:=\theta + \int_{0}^{t}\varphi(t-u)A'(u)\D u$
and $\displaystyle \widetilde{\varphi}(t,u) := \varphi(t-u)K(u,u)$. 
If~$\widetilde{\varphi}$ satisfies Assumption~\ref{assu:Kernel}, then the analysis above still holds.



\subsubsection{Comments on the Bond process}
Let $R_{t,T}  := \int_{t}^{T} r_s \D s$ be the integrated short rate process
and $B_{t,T} := \E^{-R_{t,T}}$ the bond price process on $[0,T]$.
\begin{lemma}\label{lem:smoothEx}
The process $(B_{t,T})_{t\in [0,T]}$ satisfies $B_{T,T}=1$ and, for $t \in [0,T)$,
$$
\frac{\D B_{t,T}}{B_{t,T}}
 = r_t \D t
 = \left(\theta(t) + \int_{0}^{t}\varphi(t-u) A'(u) \D u
+ \int_{0}^{t} \varphi(t-u)K(u,u)\D W_u\right)\D t.
$$
\end{lemma}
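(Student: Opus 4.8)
The plan is to exploit the fact that, viewed as a process in its lower time argument~$t$ with~$T$ fixed, $R_{t,T}$ is of finite variation, so that $B_{t,T}=\E^{-R_{t,T}}$ carries no It\^o second-order correction and its dynamics are purely of drift type. The boundary condition is immediate: $R_{T,T}=\int_T^T r_s\D s=0$, hence $B_{T,T}=\E^{0}=1$.

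For the dynamics, I would first write $R_{t,T}=\int_0^T r_s\D s-\int_0^t r_s\D s$, observing that the first term is constant in~$t$ (an $\Ff_T$-measurable random variable) while the second is, for almost every path, absolutely continuous in~$t$ with derivative~$r_t$. This uses that $r$ is almost surely Lebesgue-integrable on $[0,T]$, which follows from the integrability built into Assumption~\ref{assu:Kernel} together with the Gaussian structure of~$r$. Consequently $t\mapsto R_{t,T}$ is absolutely continuous with $\D R_{t,T}=-r_t\D t$; in particular it has finite variation and zero quadratic variation.

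I would then apply It\^o's formula to the $C^2$ map $x\mapsto\E^{-x}$ composed with the finite-variation process $R_{\cdot,T}$. Because $R_{\cdot,T}$ has no martingale part in~$t$ and $\D\langle R_{\cdot,T},R_{\cdot,T}\rangle_t=0$, the quadratic-variation term drops out entirely and the formula reduces to the ordinary chain rule, giving $\D B_{t,T}=-B_{t,T}\,\D R_{t,T}=B_{t,T}\,r_t\,\D t$, that is $\D B_{t,T}/B_{t,T}=r_t\D t$. To reach the displayed right-hand side, I substitute the explicit form of the short rate from the smooth Volterra decomposition obtained just above, namely $r_t=\theta(t)+\int_0^t\varphi(t-u)A'(u)\D u+\int_0^t\varphi(t-u)K(u,u)\D W_u$.

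The only genuine subtlety, and the point I would treat most carefully, is the justification that there is no It\^o correction: one must be certain that $R_{\cdot,T}$ really is of finite variation in~$t$ despite $r_s$ being itself a (generally non-semimartingale) Gaussian process in~$s$. The decomposition $R_{t,T}=\int_0^T r_s\D s-\int_0^t r_s\D s$ makes this transparent, since integrating $r$ against Lebesgue measure regularises it pathwise, and the future randomness of $r_s$ for $s>t$ enters $B_{t,T}$ only through a term that is constant in~$t$ and therefore contributes nothing to the $t$-dynamics. This is precisely what distinguishes $B_{t,T}$ from the conditionally-expected bond price $P_{t,T}$ of Proposition~\ref{prop:ZeroCoupon}, whose dynamics retain a genuine $\Wf$-martingale component.
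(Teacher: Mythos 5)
Your proof is correct and takes essentially the same route as the paper: both arguments rest on the observation that $t\mapsto R_{t,T}$ has finite variation with $\D R_{t,T}=-r_t\,\D t$, so the quadratic-variation term vanishes and the chain rule yields $\D B_{t,T}=B_{t,T}\,r_t\,\D t$, after which the smooth Volterra decomposition of $r_t$ is substituted. The only difference lies in how that chain-rule step is justified: you argue pathwise via absolute continuity (noting that the anticipating part of $R_{t,T}$ is constant in~$t$), whereas the paper invokes the anticipating It\^o formula of \cite[Theorem 4]{alos2001stochastic}, needed there precisely because $B_{\cdot,T}$ is not adapted; both justifications are legitimate.
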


\begin{proof}
For any $t\in [0,T)$, we can write
\begin{align*}
r_t & = \theta(t) + \int_{0}^{t}\varphi(t-u) \D\left(\int_{0}^{u}K(s,s)\D W_s + A(u)\right)
 = \theta(t) + \int_{0}^{t}\varphi(t-u) A'(u) \D u+  \int_{0}^{t}\varphi(t-u) K(u,u)\D W_u.
\end{align*}
and therefore 
\begin{equation}\label{eq:SDE_RtT}
\D R_{t,T} = -r_t \D t
 = -\left(\theta(t) + \int_{0}^{t}\varphi(t-u) A'(u) \D u
 +\int_{0}^{t}\varphi(t-u) K(u,u)\D W_u\right)\D t.
\end{equation}
It\^o's formula~\cite[Theorem 4]{alos2001stochastic} then yields
\begin{align*}
B_{T,T} & = B_{t,T} - \int_{t}^{T} B_{s,T}\D R_{s,T} + \half\int_{t}^{T} B_{s,T}\D \langle R, R\rangle_{s,T} \\
 & = B_{t,T} + \int_{t}^{T} B_{s,T}\left\{
\left(\theta(s) + \int_{0}^{s}\varphi(s,u) A'(u) \D u\right)
+ \int_{0}^{s} \varphi(s,u)K(u,u)\D W_u\right\}\D s.
\end{align*}
so that, since $B_{T,T}=1$,
the lemma follows from
\begin{align*}
\D B_{t,T} & = -\D\left(\int_{t}^{T} B_{s,T}\left\{
\left(\theta(s) + \int_{0}^{s}\varphi(s,u) A'(u) \D u\right)
+ \int_{0}^{s} \varphi(s,u)K(u,u)\D W_u\right\}\D s\right)\\
& = B_{t,T}\left\{
\left(\theta(t) + \int_{0}^{t}\varphi(t-u) A'(u) \D u\right)
+ \int_{0}^{t} \varphi(t-u)K(u,u)\D W_u\right\}\D t.
\end{align*}
\end{proof}

\begin{remark}
We can also write~$R_{t,T}$ in integral form as follows, using stochastic Fubini:
\begin{align*}
R_{t,T} & = \int_{t}^{T}\left[\theta(s) + \int_{0}^{s}\varphi(s,u) A'(u) \D u +  \int_{0}^{s}\varphi(s,u) K(u,u)\D W_u\right]\D s\\
 & = \Theta_{t,T} + \int_{t}^{T}\left(\int_{0}^{s}\varphi(s,u) A'(u) \D u \right) \D s + \int_{t}^{T}\left( \int_{0}^{s}\varphi(s,u)K(u,u)\D W_u\right)\D s\\
 & = \Theta_{t,T} + \int_{0}^{t}\left(\int_{t}^{T}\varphi(s,u)\D s \right)  A'(u) \D u + \int_{0}^{t}\left( \int_{t}^{T}\varphi(s,u)\D s\right) K(u,u)\D W_u\\
 &\qquad\quad + \int_{t}^{T}\left(\int_{u}^{T}\varphi(s,u)\D s \right)  A'(u) \D u + \int_{t}^{T}\left( \int_{u}^{T}\varphi(s,u)\D s\right) K(u,u)\D W_u\\
 & = \Theta_{t,T} + \int_{0}^{t}\Phi_{t}(u) A'(u) \D u + \int_{0}^{t}\Phi^K_{t}(u)\D W_u + \int_{t}^{T}\Phi_u(u) A'(u) \D u + \int_{t}^{T}\Phi^K_u(u)\D W_u,
\end{align*}
with 
$\Phi_{t}(u) := \displaystyle\int_{t}^{T}\varphi(s,u)\D s$ and
$\Phi^K_t(u) := \Phi_{t}(u)K(u,u)$.
As a consistency check, we have
\begin{align*}
\D R_{t,T} & = -\theta(t)\D t + \Phi_{t}(t) A'(t) \D t + \Phi^K_{t}(t)\D W_t - \Phi_t(t) A'(t) \D t -\Phi^K_t(t)\D W_t
+ \int_{0}^{t}\partial_t\Phi_{t}(u) A'(u) \D u\D t + \int_{0}^{t}\partial_t\Phi^K_{t}(u)\D W_u\D t\\
& = \left(-\theta(t) + \Phi_{t}(t) A'(t) - \Phi_t(t) A'(t) + \int_{0}^{t}\partial_t\Phi_{t}(u) A'(u) \D u + \int_{0}^{t}\partial_t\Phi^K_{t}(u)\D W_u\right)\D t 
+ \left(\Phi^K_{t}(t) -\Phi^K_t(t)\right)\D W_t\\
& = \left(-\theta(t) + \Phi_{t}(t) A'(t) - \Phi_t(t) A'(t) + \int_{0}^{t}\partial_t\Phi_{t}(u) A'(u) \D u \right)\D t 
+ \int_{0}^{t} \partial_t\Phi^K_{t}(u)\D W_u\D t\\
& = \left(-\theta(t) + \int_{0}^{t}\partial_t\Phi_{t}(u) A'(u) \D u \right)\D t 
+ \int_{0}^{t} \partial_t\Phi^K_{t}(u)\D W_u\D t\\
& = -\left(\theta(t) + \int_{0}^{t}\varphi(t-u) A'(u) \D u \right)\D t 
- \int_{0}^{t} \varphi(t-u)K(u,u)\D W_u\D t,
\end{align*}
which corresponds precisely to~\eqref{eq:SDE_RtT}.
\end{remark}

\subsubsection{Specific example}\label{sec:NotBrownian}
Consider the kernel
$K(t,s) = \E^{-\beta (t-s)}$ with $\beta>0$, 
so that $K(t,t) = 1$ and $\partial_{t}K(t,s) = -\beta K(t,s)$.
In this case, 
$A'(t) = \int_{0}^{t}\partial_{t}K(t,s)\D W_s = -\beta \Wf_t$,
so that 
$\D\Wf_t = \D W_t -\beta \Wf_t\D t$, 
which is an Ornstein-Uhlenbeck process,
with covariance, for all $s, t\geq 0$,
$$
\EE[\Wf_s \Wf_t]  = \EE\left[\int_{0}^{s}K(s,u)\D W_u \cdot\int_{0}^{t}K(t,u)\D W_u\right] = \int_{0}^{s}K(s,u)K(t,u)\D u = \frac{\E^{-\beta|t-s|} - \E^{-\beta(s+t)}}{2\beta}.
$$

The short rate dynamics in~\eqref{eq:ShortRatedWH} then reads
$$
r_t = \theta(t) + \int_{0}^{t}\varphi(t,u)\D \Wf_u
 = \theta(t) + \int_{0}^{t}\varphi(t,u)\Big(\D W_u -\beta \Wf_u\D u\Big)
 = \widetilde{\theta}(t) + \int_{0}^{t}\varphi(t,u)\D W_u,
 $$
with
$\widetilde{\theta}(t) = \theta(t) -\beta\int_{0}^{t}\varphi(t-u)\Wf_u\D u$,
and the zero-coupon bond dynamics
(Proposition~\ref{prop:ZeroCoupon}) reads
$$
P_{t,T} 
 = 
\exp\left\{-\widetilde{\Theta}_{t,T} + \half\int_{t}^{T}\XiT(u)^2 \D u + \int_{0}^{t}\XiT(t,u)\D W_u\right\},
$$
with 
$$
\widetilde{\Theta}_{t,T}
:= \int_{t}^{T}\widetilde{\theta}_s \D s
= \int_{t}^{T}\Big(\theta(s) -\beta\int_{0}^{s}\varphi(s-u)\Wf_u\D u\Big)\D s
= \Theta_{t,T}
- \beta \int_{t}^{T}\int_{0}^{s}\varphi(s-u)\Wf_u\D u\, \D s.
$$
Applying stochastic Fubini, we then obtain
$$\widetilde{\Theta}_{t,T}
  = \Theta_{t,T}
- \beta \int_{0}^{T}\int_{t+u(1-t/T)}^{T}\varphi(s-u)\D s\,\Wf_u\D u
  = \Theta_{t,T}
- \beta \int_{0}^{T}
\left(\Phi(T-u) - \Phi\left(t-\frac{t}{T}\right)\right)\Wf_u\D u.
$$
We note that the convexity adjustment in Proposition~\ref{prop:Convexity} is only affected by a different weighting scheme in the integral given by the function~$\gamma'_{\Wf}$.
In our case, from the covariance computation above,
$\gamma_{\Wf}(t) = \frac{1}{2\beta}(1-\E^{-2\beta t})$,
and therefore
$\gamma'_{\Wf}(t) = \E^{-2\beta t}$.
\section{Pricing OIS products and options}
\label{sec:Products}

\subsection{Simple compounded rate}

Using Proposition~\ref{prop:ZeroCoupon}, we can compute several OIS products and options
Consider the simple compounded rate
\begin{equation}\label{eq:Rate}
r^S(t_0,T) := \frac{1}{\Df(t_0,T)}\left(\prod_{i=0}^{n-1}\frac{1}{P_{t_i,t_{i+1}}} - 1\right),
\end{equation}
where~$\Df(t_0,T)$  is the day count fraction and $n$ the number of business days in the period
$[t_0,t_n]$.
The following then holds directly:
$$
r^S(t_0^R,T) = \frac{1}{\Df(t_0,T)}\left(\prod_{i=0}^{n-1}
\exp\left\{\Theta_{t_{i}^R,t_{i+1}^R} - \half\int_{t_{i}^R}^{t_{i+1}^R}\Xi(u,u)^2 \D u - \left(\Xi(t_{i}^R,\cdot)\circ \Wf\right)_{t_{i}^R}\right\}
 - 1\right),
$$
where the superscript~$^R$ refers to reset dates;
we use the superscript~$^A$ to refer to accrual dates below.

\subsection{Compounded rate cashflows with payment delay}
The present value at time zero of a compounded rate cashflow is given by
\begin{align*}
\mathrm{PV}_{\mathrm{flow}}
 & = P_{0,T_{p}}\Df(t_0^A, t_n^A)\EE^{\QQ^{T_{p}}}\left[r^{S}\right]\\
 & = P_{0,T_{p}}\Df(t_0^A, t_n^A)
\EE^{\QQ^{T_{p}}}\left[\frac{1}{\Df(t_0^A, t_n^A)}\left\{\prod_{i=0}^{n-1}\left(1+\frac{\Df(t_i^A,t_{i+1}^A)}{\Df(t_i^R,t_{i+1}^R)}\left(\frac{P_{t,t_i^R}}{P_{t,t_{i+1}^R}}-1\right)\right)-1\right\}\right],
\end{align*}
where $r^{S}$ denotes the compounded RFR rate.
In the case where there is no reset delays, namely $t_i^R=t_i^A$ for all $i=0,\ldots, n$, then
\begin{align*}
\mathrm{PV}_{\mathrm{flow}}
 = P_{0,T_{p}}
\EE^{\QQ^{T_{p}}}\left[\prod_{i=0}^{n-1}\left(\frac{P_{t,t_i^R}}{P_{t,t_{i+1}^R}}\right)-1\right]
 & = P_{0,T_{p}}
\EE^{\QQ^{T_{p}}}\left[\frac{P_{t,t_0^R}}{P_{t,t_{n}^R}} - 1\right]\\
&  = P_{0,T_{p}}
\left(\frac{P_{0,t_0^R}}{P_{0,t_{n}^R}}\Cf_{t}^{T_{p}}(t_0^R,t_n^R) - 1\right)\\
&  = P_{0,T_{p}}
\left(\frac{P_{0,T_{RS}}}{P_{0,T_{RE}}}\Cf_{t}^{T_{p}}(T_{RS},T_{RE}) - 1\right),
\end{align*}
where $t_0^R = T_{RS}$ and $t_n^R = T_{RE}$,
using the convexity adjustment formula given in Proposition~\ref{prop:Convexity}.

\subsection{Compounded rate cashflows with reset delay}
Assuming now that $t_i^R \ne t_i^A$, we can write
$r^S_t = \widetilde{r}_t^S + r_{t}^{S, adj}$,
from~\eqref{eq:Rate},
where
$$
\widetilde{r}^S_t := \frac{1}{\Df(t_0^R,t_n^R)}\left(\frac{P_{t,T_{RS}}}{P_{t, T_{RE}}} - 1\right),
$$
and $r^{S, adj}_t$ is implied from the decomposition above.
Therefore
\begin{align*}
\mathrm{PV}_{\mathrm{flow}}
 & = P_{0,T_{p}}\Df(t_0^A, t_n^A)\EE^{\QQ^{T_{p}}}\left[r^{S}_t\right]\\
 & = P_{0,T_{p}}\Df(t_0^A, t_n^A)\EE^{\QQ^{T_{p}}}\left[\widetilde{r}_t^S + r_{t}^{S, adj}\right]\\
 & = P_{0,T_{p}}\Df(t_0^A, t_n^A)\EE^{\QQ^{T_{p}}}\left[\frac{1}{\Df(t_0^R,t_n^R)}\left(\frac{P_{t,T_{RS}}}{P_{t, T_{RE}}} - 1\right) + r_{t}^{S, adj}\right]\\
 & = P_{0,T_{p}}\Df(t_0^A, t_n^A)
 \left\{\frac{1}{\Df(t_0^R,t_n^R)}\left(\frac{P_{0,T_{RS}}}{P_{0,T_{RE}}}\Cf_{t}^{T_{p}}(T_{RS},T_{RE}) - 1\right) + \EE^{\QQ^{T_{p}}}\left[r_{t}^{S, adj}\right]\right\}\\
 & = P_{0,T_{p}}\frac{\Df(t_0^A, t_n^A)}{\Df(t_0^R,t_n^R)}
 \left\{\frac{P_{0,T_{RS}}}{P_{0,T_{RE}}}\Cf_{t}^{T_{p}}(T_{RS},T_{RE}) - 1
  + \Df(t_0^R,t_n^R)\EE^{\QQ^{T_{p}}}\left[r_{t}^{S, adj}\right]\right\}.
\end{align*}
Assume now that $\EE^{\QQ^{T_{p}}}\left[r_{t}^{S, adj}\right] = r_{0}^{S, adj}$,
so that we can simplify the above as 
\begin{align*}
\mathrm{PV}_{\mathrm{flow}}
 & = P_{0,T_{p}}\frac{\Df(t_0^A, t_n^A)}{\Df(t_0^R,t_n^R)}
 \left\{\frac{P_{0,T_{RS}}}{P_{0,T_{RE}}}\Cf_{t}^{T_{p}}(T_{RS},T_{RE}) - 1
  + \Df(t_0^R,t_n^R)r_{0}^{S, adj}\right\}\\
 & = P_{0,T_{p}}\frac{\Df(t_0^A, t_n^A)}{\Df(t_0^R,t_n^R)}
 \left\{\frac{P_{0,T_{RS}}}{P_{0,T_{RE}}}\Cf_{t}^{T_{p}}(T_{RS},T_{RE}) - 1
  + \Df(t_0^R,t_n^R)\left(r_{0}^{S} - \widetilde{r}_{0}^{S}\right)\right\}\\
 & = P_{0,T_{p}}\frac{\Df(t_0^A, t_n^A)}{\Df(t_0^R,t_n^R)}
 \left\{\frac{P_{0,T_{RS}}}{P_{0,T_{RE}}}\Cf_{t}^{T_{p}}(T_{RS},T_{RE}) - 1
  + \Df(t_0^R,t_n^R)\left(r_{0}^{S} - \frac{1}{\Df(t_0^R,t_n^R)}\left(\frac{P_{0,T_{RS}}}{P_{0, T_{RE}}} - 1\right)\right)\right\}\\
 & = P_{0,T_{p}}\frac{\Df(t_0^A, t_n^A)}{\Df(t_0^R,t_n^R)}
 \left\{\frac{P_{0,T_{RS}}}{P_{0,T_{RE}}}\left(\Cf_{t}^{T_{p}}(T_{RS},T_{RE}) -1 \right)
  + \Df(t_0^R,t_n^R) r_{0}^{S}\right\}.
\end{align*}

\section{Numerics}
\label{sec:Numerics}
\subsection{Zero-coupon dynamics}
In Figure~\ref{fig:ZeroCouponDynamics_Exp}, we analyse the impact of the parameter, $\alpha$ in the Exponential kernel case (Section~\ref{sec:Ex_ExpKernel})
and~$H$ in the Riemann-Liouville case (Section~\ref{sec:Ex_RLKernel}),
on the dynamics of the zero-coupon bond over a time span $[0,1]$ and considering a constant curve $\theta(\cdot) = 6\%$.
In order to compare them properly, the underlying Brownian path is the same for all kernels.
Unsurprisingly, we observe that the Riemann-Liouville case creates a lot more variance of the dynamics.

\begin{figure}[H]
\centering
\includegraphics[scale=0.45]{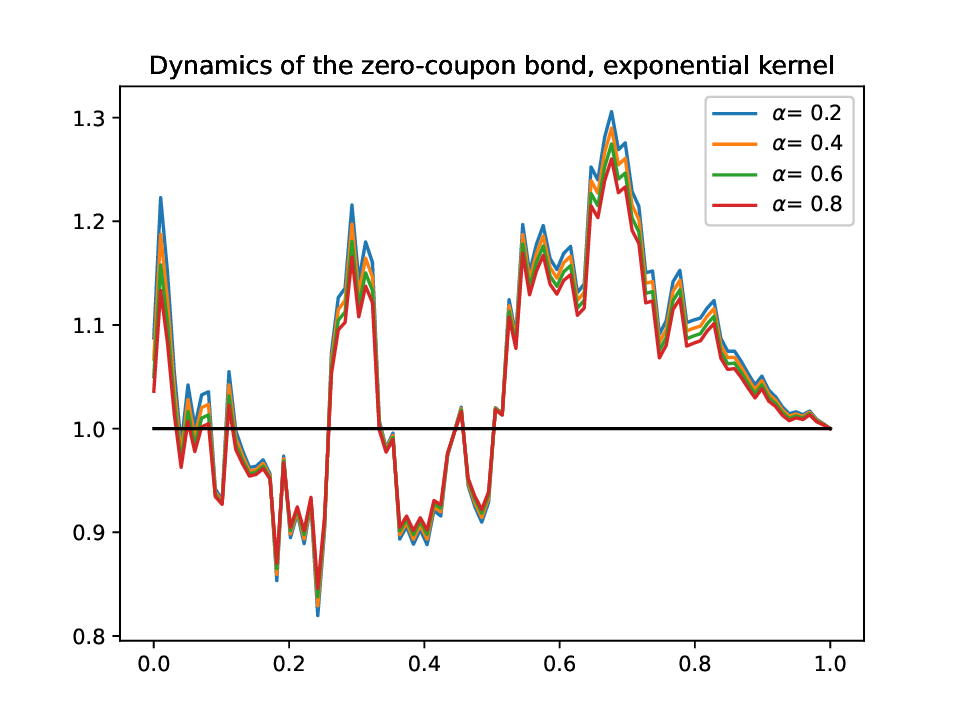}
\includegraphics[scale=0.45]{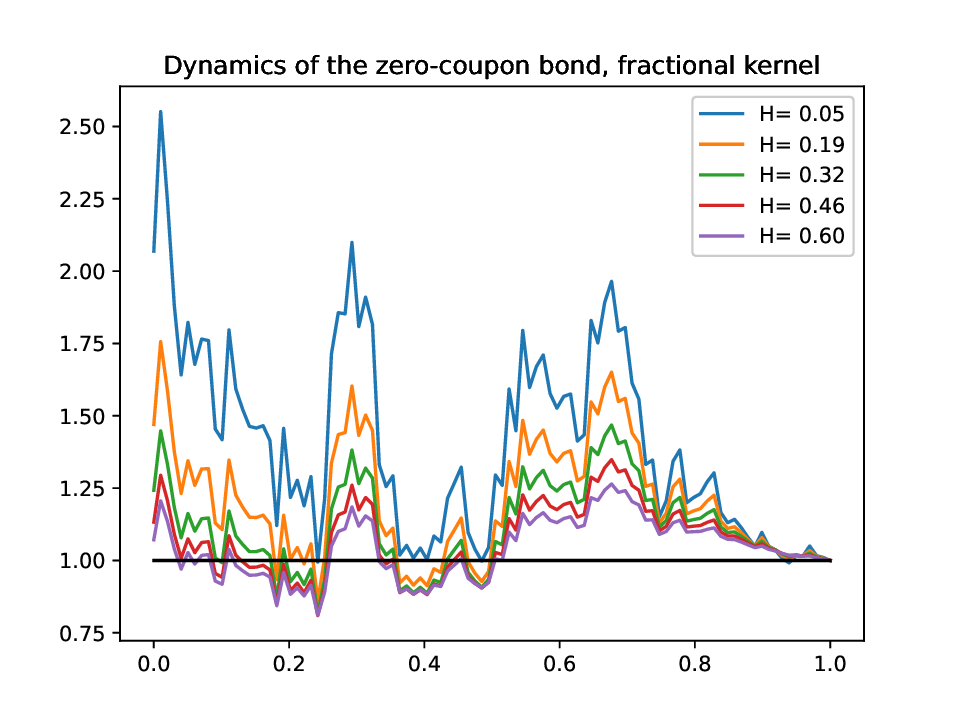}
\caption{Dynamics of the zero-coupon bond in the Exponential (left) and the Riemann-Liouville (right) kernel case.}
\label{fig:ZeroCouponDynamics_Exp}
\end{figure}

\subsection{Impact of the roughness on convexity}

We compare in Figure~\ref{fig:Impact_ExpKernel} the impact of the (roughness of the) kernel on the convexity adjustment.
We consider a constant curve 
$\theta(\cdot) = 6\%$ and
$(t, \tf_1, \tf_2, \tau) = (1, 2, 3, 2)$.
As~$\alpha$ tends to zero (exponential kernel case) and as~$H$ tends to~$\half$ (Riemann-Liouville case), 
the convexity adjustments converge to the same value (as expected), 
approximately equal to~$2.718$.
In Figure~\ref{fig:NoBrownian}, 
we consider Example~\ref{sec:NotBrownian},
shifting away from a standard Brownian driver.

\begin{figure}[H]
\centering
\includegraphics[scale=0.35]{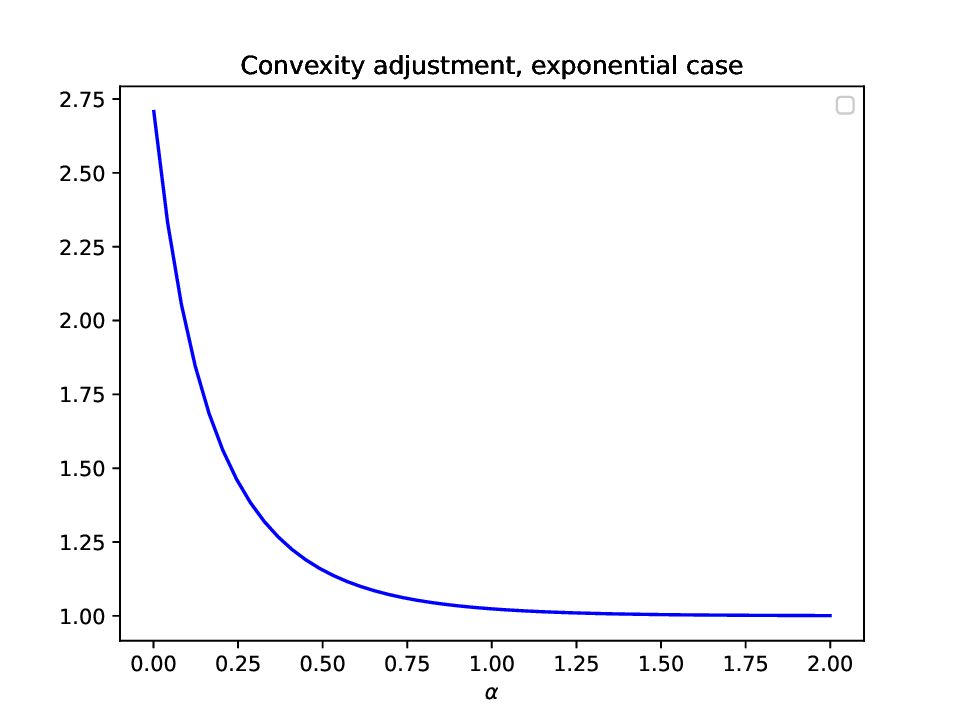}
\includegraphics[scale=0.35]{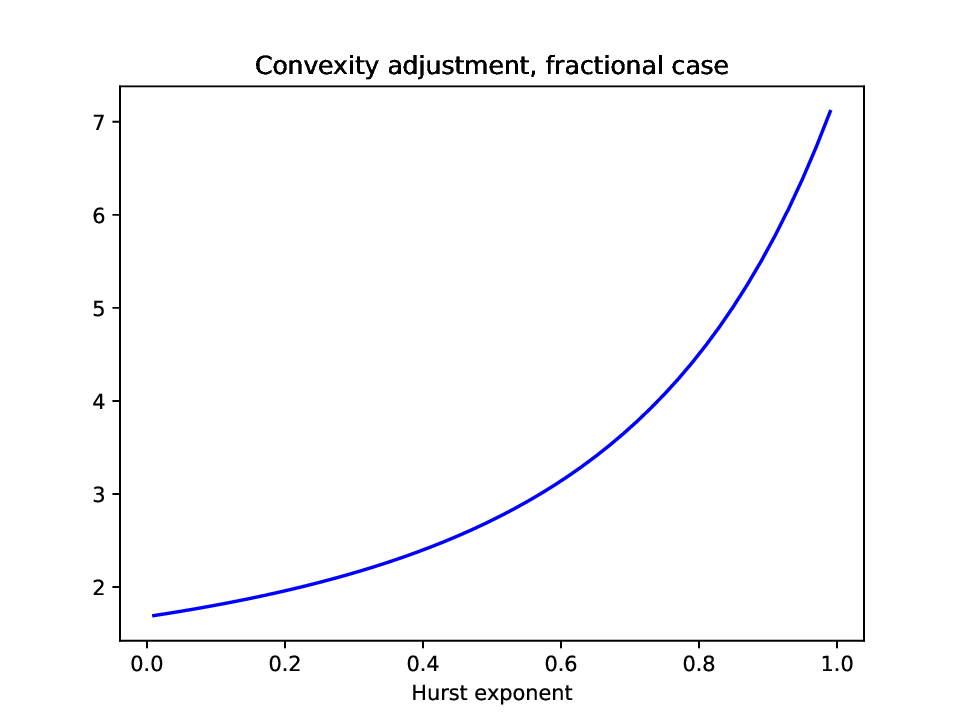}
\caption{Left: Impact of the exponential factor~$\alpha$ on the convexity for the Exponential kernel from Section~\ref{sec:Ex_ExpKernel}.
Right: Impact of the Hurst exponent~$H$ on the convexity for the power-law kernel from Section~\ref{sec:Ex_RLKernel}}
\label{fig:Impact_ExpKernel}
\end{figure}

\begin{figure}[H]
\centering
\includegraphics[scale=0.45]{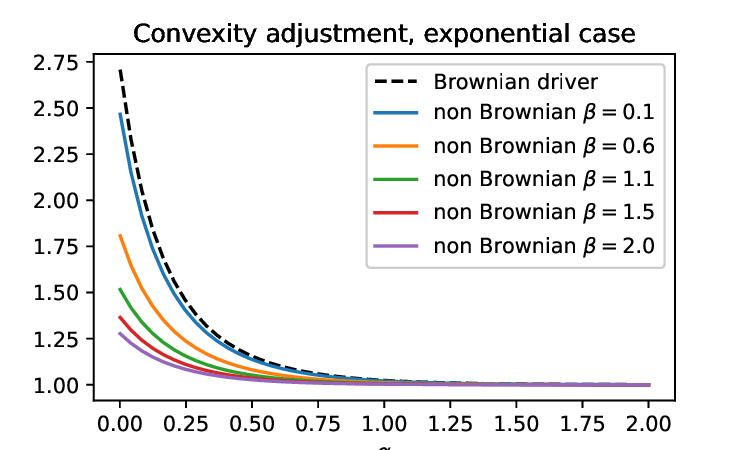}
\includegraphics[scale=0.45]{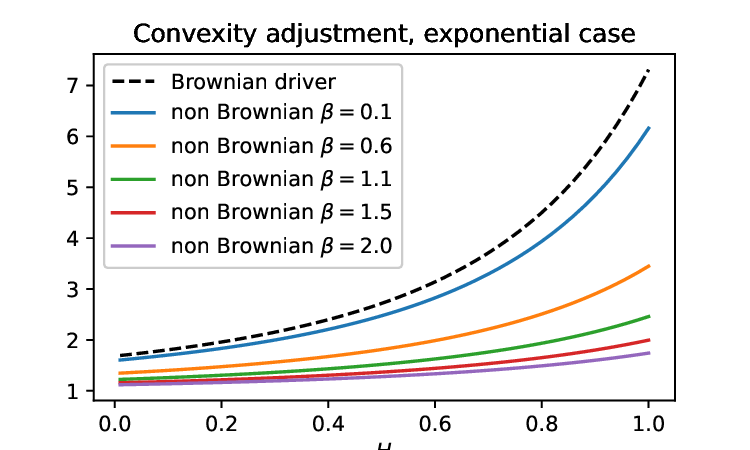}
\caption{Left: Impact of the exponential factor~$\alpha$ on the convexity for the Exponential kernel with standard Brownian motion (black dashed) and with OU driver with different~$\beta$ parameters.
Right: Same but with the power-law kernel.}
\label{fig:NoBrownian}
\end{figure}

\section*{Disclosure of interest}
We confirm that there are no relevant financial or non-financial competing interests to report.

\newpage
\bibliography{References}
\bibliographystyle{siam}


\end{document}